\tikzstyle{ubrace} = [draw, thick, decoration={brace, amplitude=7pt, mirror, raise=0.0cm}, decorate,
\tikzstyle{rbrace} = [draw, thick, decoration={brace, amplitude=7pt, mirror, raise=0.0cm}, decorate,
\tikzstyle{obrace} = [draw, thick, decoration={brace, amplitude=7pt, raise=0.0cm}, decorate,
\tikzstyle{lbrace} = [draw, thick, decoration={brace, amplitude=7pt, raise=0.0cm}, decorate,
\tikzstyle{eve} = [draw, diamond, minimum size=20pt, inner sep=1pt]
\tikzstyle{adam} = [draw, regular polygon, regular polygon sides=4, inner sep=1pt]
\tikzstyle{state} = [circle, draw, inner sep=1pt, minimum size=20pt]
\tikzstyle{bool} = [circle, draw, inner sep=1pt, minimum size=5pt, scale=0.9]
\tikzstyle{dot} = [fill, circle, inner sep=0pt, minimum size=4pt]
\tikzstyle{dots} = [scale=2.0]
\tikzstyle{trans} = [draw, -{latex}, auto]
\tikzstyle{booltrans} = [draw, -{latex}, auto]
\tikzstyle{edge} = [draw, thick]
\tikzstyle{edgeBox} = [draw, thick,-latex]
\newcommand{\evalInt}[2]{%
	\pgfmathparse{int(#2)}%
	{\global\edef#1{\pgfmathresult}}%
}
\definecolor{darkgreen}{rgb}{0.13, 0.55, 0.13}
\newcommand{\todo}[1]{\textcolor{red}{\textsf{To do: #1}}}
\newcommand{\NotNeeded}[1]{}
\newcommand{\FullVersion}[1]{}
\newcommand{\Subject}[1]{\subparagraph{#1.}}
\newcommand{\ProofSketch}{\begin{proof}[Proof sketch.]}
\newcommand{\tuple}[1]{( #1  )}
\newcommand{\pair}{\tuple}
\newcommand{\Nat}{\ensuremath{\mathbb{N}}\xspace}
\newcommand{\Func}[1]{{\mathsf{#1}}}
\newcommand{\BP}{\Func{B}^+}
\newcommand{\pro}[2]{\ensuremath{#1 \times #2}\xspace} 
\newcommand{\strat}{\stratE}
\newcommand{\boxes}{\mathsf{B}}
\newcommand{\A}{{\cal A}}
\newcommand{\B}{{\cal B}}
\newcommand{\C}{{\cal C}}
\newcommand{\D}{{\cal D}}
\newcommand{\E}{{\cal E}}
\newcommand{\G}{{\cal G}}
\newcommand{\N}{{\cal N}}
\renewcommand{\S}{{\cal S}}
\newcommand{\restr}{{\upharpoonright}}
\newcommand{\trans}[3]{#1\xrightarrow[]{#2}#3}
\newcommand{\EGFG}{\exists\mathrm{\text{-}GFG}}
\newcommand{\AGFG}{\forall\mathrm{\text{-}GFG}}
\newcommand{\boxX}[1]{\mathtt{box}(#1)}
\newcommand{\boxA}{\boxX{\A}} 
\newcommand{\boxAco}{\boxX{\bar \A}}
\newcommand{\noLab}{\epsilon}
\newcommand{\stratE}{\sigma}
\newcommand{\stratA}{\tau}
\newcommand{\upd}{\mathit{upd}}
\newcommand{\letter}{a}
\newcommand{\pspace}{{\sc{PSpace}}\xspace}
\newcommand{\ptime}{{\sc{PTime}}\xspace}
\newcommand{\exptime}{{\sc{Exptime}}\xspace}
\title{On the Succinctness of Alternating Parity Good-for-Games Automata (Full Version\footnote{This is the full version of the paper of the same name published at FSTTCS 2020 
})}
\author{Udi Boker}{Interdisciplinary Center (IDC) Herzliya, Israel}{udiboker@gmail.com}{}{Israel Science Foundation grant 1373/16}
\author{Denis Kuperberg}{CNRS, LIP, \'Ecole Normale Supérieure, Lyon, France}{denis.kuperberg@ens-lyon.fr}{0000-0001-5406-717X}{}
\author{Karoliina Lehtinen}{University of Liverpool, United Kingdom}{k.lehtinen@liverpool.ac.uk}{0000-0003-1171-8790}{
This project has received funding from the European Union’s Horizon 2020 research and innovation programme under the Marie Skłodowska-Curie grant
agreement No 892704.
}
\author{Micha\l~Skrzypczak}{Institute of Informatics, University of Warsaw, Poland}{mskrzypczak@mimuw.edu.pl}{0000-0002-9647-4993}{}
\authorrunning{U. Boker, D. Kuperberg, K. Lehtinen, M. Skrzypczak}
\keywords{Good for games, history-determinism, alternation}
\begin{document}
\maketitle

\begin{abstract}
We study alternating parity good-for-games (GFG) automata, i.e., alternating parity automata where both conjunctive and disjunctive choices can be resolved in an online manner, without knowledge of the suffix of the input word still to be read.

We show that they can be exponentially more succinct than both their nondeterministic and universal counterparts.
Furthermore, we present a single exponential determinisation procedure and an \exptime upper bound to the problem of recognising whether an alternating automaton is GFG.

We also study the complexity of deciding ``half-GFGness'', a property specific to alternating automata that only requires nondeterministic choices to be resolved in an online manner. We show that this problem is \pspace-hard already for alternating automata on finite words.
\end{abstract}

\section{Introduction}
\label{sec:Introduction}

\emph{Good-for-games} (GFG) automata were first introduced in~\cite{HP06} as a tool for solving the synthesis problem.
The equivalent notion of \emph{history\=/determinism} was introduced independently in~\cite{Col09} in the context of regular cost functions.
Intuitively, a nondeterministic automaton is GFG if nondeterminism can be resolved on the fly, only with knowledge of the input word read so far.
GFG automata can be seen as an intermediate formalism between deterministic and nondeterministic ones, with advantages from both worlds. Indeed, like deterministic automata, GFG automata enjoy good compositional properties---useful for solving games and composing automata and trees---and easy inclusion checks~\cite{BKKS13}. Like nondeterministic automata, they can be exponentially more succinct than deterministic automata~\cite{KS15}.

In recent years, much effort has gone into understanding various properties of nondeterministic GFG automata, for instance their relationship with deterministic automata~\cite{BKKS13, KS15,BKS17,KM19}, applications in probabilistic model checking~\cite{KMBK14} and synthesis of LTL, $\mu$\=/calculus and context-free properties~\cite{IK19,LZ20}, decision procedures for GFGness~\cite{LR13,KS15,BK18}, minimisation~\cite{AK19}, and links with recent advances in parity games~\cite{CF19}.

Alternating GFG automata are a natural generalisation of nondeterministic GFG automata that enjoy the same compositional properties as nondeterministic GFG automata, while providing more flexibility. 
As we show in the present work, for some languages alternating GFG parity automata can also be exponentially more succinct, allowing for better synthesis procedures. Indeed, two-player games with winning conditions given by alternating GFG automata are solvable in quasipolynomial time, via a linear reduction to parity games, while for winning conditions given by arbitrary alternating automata, solving games requires determinisation and has therefore double-exponential complexity.

Alternating GFG automata were introduced independently by Colcombet~\cite{colcombet2013hab} and Quirl~\cite{Quirl} while a~form of alternating GFG automata with requirements specific to counters were also considered in~\cite{KV11}, as a tool to study cost functions on infinite trees.
Boker and Lehtinen studied the expressiveness and succinctness of alternating GFG automata in~\cite{BL19}, showing that they
\begin{itemize}
\item are not more succinct than DFAs on finite words,
\item are as expressive as deterministic ones of the same acceptance condition on infinite words,
\item and can be determinised with a $2^{\theta(n)}$ size blowup for the B\"uchi and coB\"uchi conditions.
\end{itemize}

Many questions about GFG alternating automata were left open, in particular
whether there exists a~doubly exponential succinctness gap between alternating GFG and deterministic automata, and the complexity of deciding whether an~alternating parity automaton is GFG. 



\Subject{Succinctness of alternating GFG automata}
We show that  there is a single exponential gap between alternating parity GFG automata and deterministic ones, thereby answering a~question left open in~\cite{BL19}. This is in contrast to general alternating automata, for which determinisation incurs a double-exponential size increase.
However, we also show that alternating GFG automata can present exponential succinctness compared to both nondeterministic and universal GFG automata. This means that alternating GFG automata can be used to reduce the complexity of solving some games with complex acceptance conditions.

\Subject{Recognising GFG automata}

We give an \exptime upper bound to the problem of deciding whether an~alternating parity automaton is GFG, matching the known upper bound for recognising nondeterministic parity GFG automata.

We also study the complexity of deciding ``half\=/GFGness'', i.e.,\ whether the nondeterminism (or universality) of an automaton is GFG. This property guarantees that composition with games preserves the winner for one of the players. We show that already on finite words, this problem is \pspace-hard, and it is in \exptime for alternating B\"uchi automata.
This shows that a \ptime algorithm for deciding GFGness must exploit the subtle interplay between nondeterminism and universality, and cannot be reduced to checking independently whether each of them is~GFG.

\paragraph*{\large Roadmap}
We begin with some definitions, after which, in~\cref{sec:Alternating-behaviour}, we define alternating GFG automata, study their succinctness and the complexity of deciding half\=/GFGness, that is, whether the nondeterminism within an alternating automaton is GFG.  \Cref{sec:Determinisation} provides a single-exponential determinisation procedure for alternating GFG parity automata. \Cref{sec:deciding} shows that GFGness of alternating parity automata is in \exptime, using the determinisation of the previous section.
Throughout the paper, we provide high-level proof sketches, with detailed technical developments 
in the appendix.

\section{Preliminaries}
\label{sec:Preliminaries}

\Subject{Words and automata}
An \emph{alphabet} $\Sigma$ is a finite nonempty set of letters. A finite (resp.\ infinite) \emph{word} $u=u_0 \ldots u_k\in \Sigma^{*}$ (resp.\ $w=w_0 w_1\ldots\in \Sigma^{\omega}$) is a finite (resp.\ infinite) sequence of letters from $\Sigma$. 
A \emph{language} is a set of words, and the empty word is written $\epsilon$. We denote a set $\{i,i+1,\ldots,j\}$ of integers by $[i,j]$.
 
An \emph{alternating word automaton} is a~tuple $\A=(\Sigma,Q,\iota,\delta,\alpha)$, where: $\Sigma$ is an alphabet; $Q$ is a finite nonempty set of states; $\iota\in Q$ is an~initial state; $\delta\colon Q\times \Sigma \to \BP(Q)$ is a transition function where $\BP(Q)$ is the set of positive Boolean formulas (\emph{transition conditions}) over $Q$; and $\alpha$, on which we elaborate below, is either an acceptance condition or a transition labelling on top of which an acceptance condition is defined.
For a state $q\in Q$, we denote by $\A^q$ the automaton that is derived from $\A$ by setting its initial state $\iota$ to $q$. 

An automaton $\A$ is nondeterministic (resp.\ universal) if all its transition conditions are disjunctions (resp.\ conjunctions), and it is deterministic if all its transition conditions are just states. We represent the transition function of nondeterministic and universal automata as $\delta\colon Q\times \Sigma\to 2^Q$, and of a deterministic automaton as $\delta\colon Q\times \Sigma\to Q$. A \emph{transition} of an automaton is a triple $(q,\letter,q')\in Q{\times}\Sigma{\times} Q$, sometimes also written $\trans{q}{\letter}{q'}$.

We denote by $\widehat{\delta}\subseteq \BP(Q)$ the set of all subformulas of formulas in the image of $\delta$, i.e., all the Boolean formulas that ``appear'' somewhere in the transition function of $\A$. 

\Subject{Acceptance conditions}
There are various acceptance (winning) conditions, defined with respect to the set of transitions\footnote{Acceptance is defined in the literature with respect to either states or transitions; for technical reasons we prefer to work with acceptance on transitions.} that a path of $\A$ visits infinitely often. (Notice that a transition condition allows for many possible transitions.)
We later formally define acceptance of a word $w$ by $\A$ in terms of games, and consider a path of $\A$ on a word $w$ as a play in that game. For nondeterministic automata, a ``run'' coincides with a ``path''.

Some of the acceptance conditions are defined on top of a labelling of the transitions rather than directly on the transitions. In particular, in the parity condition, we have $\alpha\colon Q\times\Sigma\times Q \to \Gamma$, where $\Gamma\subseteq\Nat$ is a finite set of priorities and a path is accepting if and only if the highest priority seen infinitely often on it is even.

The B\"uchi and coB\"uchi conditions are special cases of the parity condition with $\Gamma=\{1,2\}$ and $\Gamma=\{0,1\}$, respectively. When speaking of B\"uchi and coB\"uchi automata, we often refer to $\alpha$ as the set of ``accepting transitions'', namely the transitions that are mapped to $2$ in the B\"uchi case and to $0$ in the coB\"uchi case.
The weak condition is a special case of both the B\"uchi and coB\"uchi conditions, in which every path eventually remains in the same priority.

The Rabin and Streett conditions are more involved, yet defined directly on the set $T$ of transitions. A Rabin condition is a set $\{\pair{B_1,G_1}, \pair{B_2,G_2},\ldots, \pair{B_k,G_k}\}$, with $B_i, G_i \subseteq T$,
and a path $\rho$ is accepting iff for some $i\in [1,k]$, we have that the set $\inf(\rho)$ of transitions that are visited infinitely often in $\rho$ satisfies ($\inf(\rho) \cap B_i = \emptyset$ and $\inf(\rho) \cap G_i \neq \emptyset$). A Streett condition is dual: a set $\{\pair{B_1,G_1}, \pair{B_2,G_2},
\ldots, \pair{B_k,G_k}\}$, with $B_i, G_i \subseteq Q$, whereby a~path $\rho$ is accepting iff for all $i\in [1,k]$, we have ($\inf(\rho) \cap B_i = \emptyset$ or $\inf(\rho) \cap G_i \neq \emptyset$).

\Subject{Sizes and types of automata}
The size of $\A$ is the maximum of the alphabet size, the number of states, the transition function length, which is the sum of the transition condition lengths over all states and letters, and the acceptance condition's index, which is $1$ for weak, B\"uchi and coB\"uchi, $|\Gamma|$ for parity, and $k$ for Rabin and Street.

We sometimes abbreviate automata types by three-letter acronyms in $\{$D, N, U, A$\} \times \{$F, W, B, C, P, R, S$\} \times \{$A,W$\}$. The first letter stands for the transition mode, the second for the acceptance condition, and the third indicates that the automaton runs on finite or infinite words. For example, DPW stands for a deterministic parity automaton on infinite words.

\Subject{Games and strategies}
Some of our technical proofs use standard concepts of an arena, a game, a winning strategy, etc. For the sake of completeness, we provide precise mathematical definitions of these objects in \cref{ap:Preliminaries}. Here we will just overview the involved concepts.

First, we work with two\=/player games of perfect information, where the players are Eve and Adam. These games are played on graphs (called arenas). Most of the considered games are of infinite duration and their winning condition is expressed in terms of the infinite sequences of edges taken during the play. We invoke results of determinacy (one of the players has a winning strategy), as well as of \emph{positional determinacy} (one of the players has a strategy that depends only on the last position of the play).

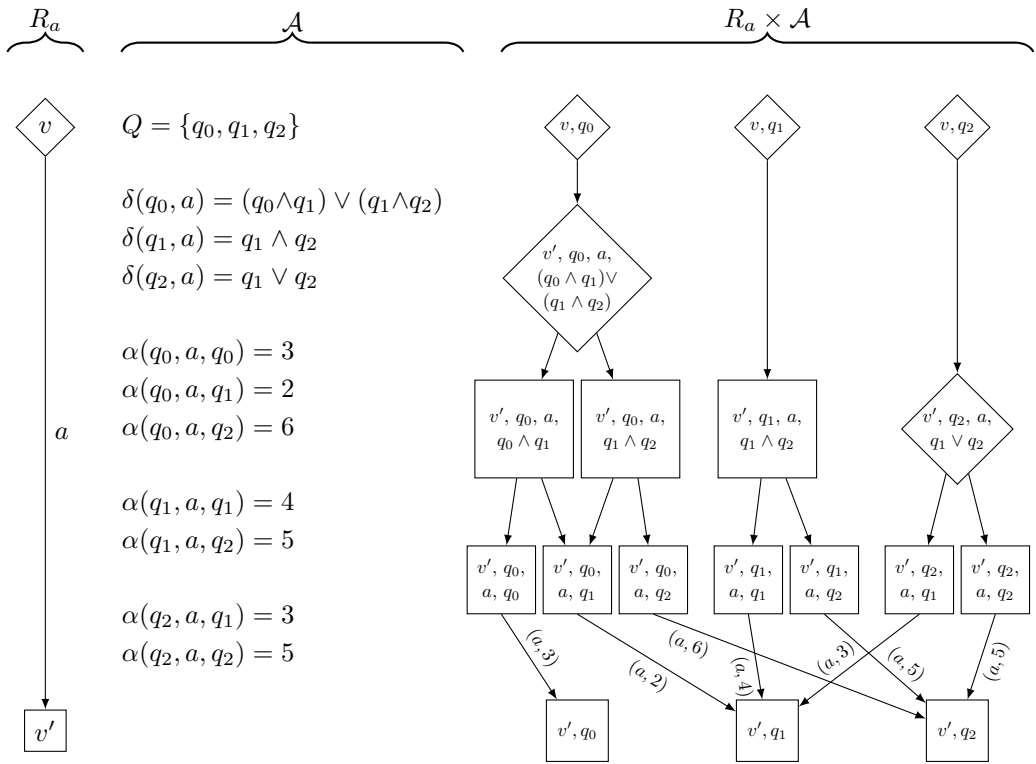
\begin{figure}[h]
	\centering
	\begin{tikzpicture}

\draw (-0.5, 1) edge[obrace] node{$R_a$} (0.5, 1);
\draw (1, 1) edge[obrace] node{$\A$} (5.5, 1);
\draw (6, 1) edge[obrace] node{$R_a\times \A$} (13, 1);

\node[eve,  minimum size=22pt] (vi) at (0, 0) {$v$};
\node[adam, minimum size=22pt] (vt) at (0,-8) {$v'$};
\draw[trans] (vi) edge node {$a$} (vt);

\tikzstyle{flow} = [anchor=mid west, inner sep=0cm]
\newcommand{\x}{1}

\node[flow] at (\x, -0.0) {$Q=\{q_0, q_1, q_2\}$};

\node[flow] at (\x, -1.0) {$\delta(q_0,a)=(q_0{\land} q_1)\lor(q_1{\land}q_2)$};
\node[flow] at (\x, -1.5) {$\delta(q_1,a)=q_1 \land q_2$};
\node[flow] at (\x, -2.0) {$\delta(q_2,a)=q_1 \lor q_2$};

\node[flow] at (\x, -3.0) {$\alpha(q_0,a,q_0)=3$};
\node[flow] at (\x, -3.5) {$\alpha(q_0,a,q_1)=2$};
\node[flow] at (\x, -4.0) {$\alpha(q_0,a,q_2)=6$};

\node[flow] at (\x, -5.0) {$\alpha(q_1,a,q_1)=4$};
\node[flow] at (\x, -5.5) {$\alpha(q_1,a,q_2)=5$};

\node[flow] at (\x, -6.5) {$\alpha(q_2,a,q_1)=3$};
\node[flow] at (\x, -7.0) {$\alpha(q_2,a,q_2)=5$};


\foreach \x/\p in {0/eve, 1/adam, 2/adam} {
	\node[eve,  scale=0.7, minimum size=35pt] (v\x) at (7 + 2.5*\x, +0) {$v, q_{\x}$};
	\node[adam, scale=0.7, minimum size=30pt] (f\x) at (7 + 2.5*\x, -8) {$v', q_{\x}$};
}

\tikzstyle{Atree} = [adam, scale=0.7, inner sep=0cm, node distance=0cm and 0cm, align=center]
\tikzstyle{Etwo} = [eve, scale=0.7, inner sep=0cm, node distance=0cm and 0cm, align=center]

\node[Etwo] (w0) at ($(v0)+(-0,-2)$) {$v'$, $q_0$, $a$, \\ $(q_0\land q_1)\lor$\\$(q_1\land q_2)$};
	
\tikzstyle{Atree} = [adam, scale=0.7, inner sep=0cm, node distance=0cm and 0cm, align=center]
\tikzstyle{Etree} = [eve, scale=0.7, inner sep=0cm, node distance=0cm and 0cm, align=center]

\node[Atree] (wl0) at ($(v0)+(-0.7,-4)$) {$v'$, $q_0$, $a$,\\$q_0\land q_1$};

\node[Atree] (wr0) at ($(v0)+(+0.7,-4)$) {$v'$, $q_0$, $a$,\\$q_1\land q_2$};

\node[Atree] (wl1) at ($(v1)+(+0.0,-4)$) {$v'$, $q_1$, $a$,\\$q_1\land q_2$};

\node[Etree] (wr2) at ($(v2)+(-0,-4)$) {$v'$, $q_2$, $a$,\\$q_1\lor q_2$};

\tikzstyle{Afour} = [adam, scale=0.7, inner sep=0cm, node distance=0cm and 0cm, align=center]
\tikzstyle{Efour} = [eve, scale=0.7, inner sep=0cm, node distance=0cm and 0cm, align=center]

\node[Afour] (f00) at ($(v0)+(-1,-6)$) {$v'$, $q_0$,\\$a$, $q_0$};
\node[Afour] (f01) at ($(v0)+(+0,-6)$) {$v'$, $q_0$,\\$a$, $q_1$};
\node[Afour] (f02) at ($(v0)+(+1,-6)$) {$v'$, $q_0$,\\$a$, $q_2$};

\node[Afour] (f11) at ($(wl1)+(-0.25,-2)$) {$v'$, $q_1$,\\$a$, $q_1$};
\node[Afour] (f12) at ($(wl1)+(+0.75,-2)$) {$v'$, $q_1$,\\$a$, $q_2$};

\node[Afour] (f21) at ($(wr2)+(-0.5,-2)$) {$v'$, $q_2$,\\$a$, $q_1$};
\node[Afour] (f22) at ($(wr2)+(+0.5,-2)$) {$v'$, $q_2$,\\$a$, $q_2$};

\draw[trans] (v0) edge (w0);	
\draw[trans] (v1) edge (wl1);	
\draw[trans] (v2) edge (wr2);	

\draw[trans] (w0) edge (wl0);
\draw[trans] (w0) edge (wr0);

\draw[trans] (wl0) edge (f00);
\draw[trans] (wl0) edge (f01);
\draw[trans] (wr0) edge (f01);
\draw[trans] (wr0) edge (f02);

\draw[trans] (wl1) edge (f11);
\draw[trans] (wl1) edge (f12);

\draw[trans] (wr2) edge (f21);
\draw[trans] (wr2) edge (f22);

\tikzstyle{edtr} = [sloped, pos=0.5, anchor=south, scale=0.7]

\draw (f00.south) edge[trans] node[edtr, above] {$(a,3)$} (f0);
\draw (f01.south) edge[trans] node[edtr, below] {$(a,2)$} (f1);
\draw (f02.south) edge[trans] node[edtr, pos=0.15, below] {$(a,6)$} (f2);

\draw (f11.south) edge[trans] node[edtr, pos=0.7, below] {$(a,4)$} (f1);
\draw (f12.south) edge[trans] node[edtr, pos=0.7, above] {$(a,5)$} (f2);

\draw (f21.south) edge[trans] node[edtr, pos=0.61, above] {$(a,3)$} (f1);
\draw (f22.south) edge[trans] node[edtr, below] {$(a,5)$} (f2);
\end{tikzpicture}
	\caption{A one-step arena over a letter $a\in\Sigma$, obtained as a product of a~simple arena $R_a$ with the alternating parity automaton $\A$. In this example $v$ is controlled by Eve and $v'$ by Adam. The transitions with no label are labelled by $\noLab$. Diamond\=/shaped positions belong to Eve and square\=/shaped positions belong to Adam.}
	\label{fig:one-step-arena}
\end{figure}

\begin{figure}
	\centering
	\newcommand{\drawBox}[2]{
\foreach \x in {0,1,2} {
  \node[scale=0.8] at (#1+\x, #2+0.5) {$q_\x$};
  \node[dot] (u\x) at (#1+\x, #2+0) {};
  \node[dot] (l\x) at (#1+\x, #2-1) {};
  \node[scale=0.8] at (#1+\x, #2-1.5) {$q_\x$};
}
}

\begin{tikzpicture}
\drawBox{0}{0}
\draw[edgeBox] (u0) -- (l0);
\draw[edgeBox] (u0) -- (l1);

\draw[edgeBox] (u1) -- (l1);
\draw[edgeBox] (u1) -- (l2);

\draw[edgeBox] (u2) -- (l1);

\drawBox{3.5}{0}
\draw[edgeBox] (u0) -- (l0);
\draw[edgeBox] (u0) -- (l1);

\draw[edgeBox] (u1) -- (l1);
\draw[edgeBox] (u1) -- (l2);

\draw[edgeBox] (u2) -- (l2);

\drawBox{7}{0}
\draw[edgeBox] (u0) -- (l1);
\draw[edgeBox] (u0) -- (l2);

\draw[edgeBox] (u1) -- (l1);
\draw[edgeBox] (u1) -- (l2);

\draw[edgeBox] (u2) -- (l1);

\drawBox{10.5}{0}
\draw[edgeBox] (u0) -- (l1);
\draw[edgeBox] (u0) -- (l2);

\draw[edgeBox] (u1) -- (l1);
\draw[edgeBox] (u1) -- (l2);

\draw[edgeBox] (u2) -- (l2);
\end{tikzpicture}
	\caption{The four possible boxes corresponding to Eve's choices in the one-step arena of \cref{fig:one-step-arena}. (All edges should be labelled with $a$, which we omit for better readability.)}
	\label{fig:boxes}
\end{figure}

\Subject{Model-checking games}

To represent the semantics of an alternating automaton $\A$, we treat the Boolean formulas that appear in the transition conditions of $\A$ as games. More precisely, given a letter $\letter\in\Sigma$ we represent the transition conditions $q\mapsto\delta(q,\letter)\in\BP(Q)$ as the \emph{one-step arena} over $\letter$ (see \cref{fig:one-step-arena}). A play over this arena begins in a state $q\in Q$; then the players go down the formula $\delta(q,a)$ with Eve resolving disjunctions and Adam resolving conjunctions; and finally they reach an atom $q'\in Q$ and the play stops. This means that a play over the one-step arena over $\letter$ results in a transition of the form $\trans{q}{\letter}{q'}$. 

The language $L(\A)$ of an alternating automaton $\A$ over an alphabet $\Sigma$ is defined via the \emph{model-checking game}, defined for an~automaton $\A$ and a~word $w=a_0a_1a_2\dots\in \Sigma^\omega$. A \emph{configuration} of this game is a state $q$ of $\A$ and a~position $i\in\omega$ of $w$, starting at $(\iota,0)$. In the $i$th round, starting at configuration $(q_i,i)$, the players play on the one-step arena from $q_i$ over $a_i$, resulting in a transition $\trans{q_i}{a_i}{q_{i+1}}$. The next configuration is $(q_{i+1},i{+}1)$. The acceptance condition of $\A$ becomes the winning condition of this game. $\A$ \emph{accepts} $w$ if Eve has a winning strategy in this game. 

For technical convenience, we define (in \cref{ap:Preliminaries}) the model-checking game in terms of a \emph{synchronised product} of the word $w$ (treated as an infinite graph) and the automaton $\A$. Synchronised products turn out to be useful in the analysis of various games presented in this paper and will be used throughout the technical versions of the proofs, in the appendix.

\begin{definition}
\label{def:auto-compl}
Given an alternating automaton $\A$, we denote by $\overline{\A}$ the \emph{dual} automaton: it has the same alphabet, set of states, and initial state. Its transition conditions $\delta_{\overline{\A}}(q,\letter)$ are obtained from those of $\A$ by replacing each disjunction ${\lor}$ with conjunction ${\land}$ and vice versa. Its acceptance condition is the dual of $\A's$ condition. (In parity automata, all priorities are increased by $1$.) $\overline{\A}$ recognises the complement $L(\A)^\mathrm{c}$ of $L(\A)$.
\end{definition}

\Subject{Boxes}
Another technical concept that we use is that of \emph{boxes} (see \cref{fig:boxes}), which describe Eve's \emph{local} strategies for resolving disjunctions within a transition condition. Consider an alternating automaton $\A$ and a letter $\letter\in\Sigma$. Moreover, fix a strategy $\stratE$ of Eve that resolves disjunctions in all the transition conditions $\delta(q,\letter)$ for $q\in Q$. Now, the \emph{box} of $\A$, $\letter$, and $\stratE$ is a subset of $Q\times \Sigma\times Q$ and contains a triple $(q,\letter,q')$ iff $\stratE$ resolves disjunctions of $\delta(q,\letter)$ in such a way that Adam (resolving conjunctions) can reach the atom $q'$. In other words, this box contains $(q,\letter,q')$ if there is a play consistent with $\stratE$ on $\delta(q,\letter)$ that reaches the atom $q'$. We use $\beta$ to denote single boxes and by $\boxes_{\A,\letter}$ we denote the set of all boxes of $\A$ and $\letter$, while $\boxes_\A$ denotes the union $\bigcup_{\letter\in\Sigma} \boxes_{\A,\letter}$.
We give a more formal definition based on synchronised products in the Appendix, see page~\pageref{def:boxes}.

\begin{definition}
\label{def:path-in-boxes}
Given a~sequence of boxes $\pi=b_0,b_1,\ldots$ of an automaton $\A$ and a~path $\rho=(q_0,\letter_0,q_1),(q_1,\letter_1,q_2),\ldots$, we say that $\rho$ is a~\emph{path of $\pi$} if for every $i$ we have $(q_i,\letter_i,q_{i+1})\in b_i$.
The sequence $\pi$ is said to be~\emph{universally accepting} if every path in $\pi$ is accepting in $\A$.
\end{definition}

Intuitively, a sequence of boxes $\pi$ as above represents a particular positional strategy $\strat$ of Eve in the model-checking game over the word $w=a_0a_1a_2\ldots$ In that case, a path of $\pi$ corresponds to a possible play of this game consistent with $\strat$, and the sequence is universally accepting if and only if the strategy is winning.

\section{Good-For-Games Alternating Automata}
\label{sec:Alternating-behaviour}

Good-for-games (GFG) nondeterministic automata are automata in which the nondeterministic choices can be resolved without looking at the future of the word. For example, consider an automaton that consists of a nondeterministic choice between a component that accepts words in which $a$ occurs infinitely often and a component that accepts words in which $a$ occurs finitely often. This automaton accepts all words but is not GFG since the nondeterministic choice of component cannot be resolved without knowing the whole word.

To extend this definition to alternating automata, we must look both at its nondeterminism and universality and require that both can be resolved without knowledge of the future. The following letter games capture this intuition.

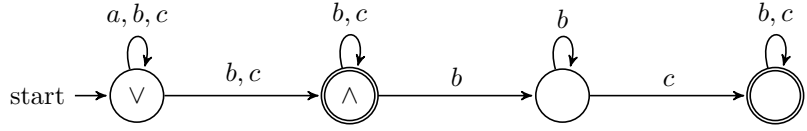
\begin{figure}
\begin{center}
 \begin{tikzpicture}[->,>=stealth',shorten >=1pt,auto,node distance=2.8cm,semithick]
  \tikzstyle{every state}=[text=black]

  \node[initial,state] (A)              {$\vee$};
  \node[state,accepting] (B) [right of=A] {$\wedge$};
  \node[state] (C) [right of=B] {};
  \node[state, accepting] (D) [right of=C] {};

  \path (A) edge [] node [above] {$b,c$} (B)
  		(B) edge [] node [above] {$b$} (C)
  		(C) edge [] node [above] {$c$} (D)
  		(A) edge [loop above] node {$a,b,c$} (A)
  		(B) edge [loop above] node {$b,c$} (B)
  	 	(C) edge [loop above] node {$b$} (C)
        (D) edge [loop above] node {$b,c$} (D);
\end{tikzpicture}
\caption{Alternating weak automaton accepting words over $\{a,b,c\}$ in which $a$ occurs finitely often and $c$ occurs infinitely often. Omitted transitions lead to a rejecting sink.}\label{ex:alt}
\end{center}
\end{figure}

\begin{definition}[Letter games~\cite{BL19}]
\label{def:LetterGames}
Given an alternating automaton $\A$, Eve's letter game proceeds at each turn from a state $q$ of $\A$, starting from the initial state of $\A$, as follows:
\begin{itemize}
\item Adam chooses a letter $\letter$,
\item Adam and Eve play on the one-step arena over $\letter$ from $q$ to a new state $q'$, where Eve resolves disjunctions and Adam conjunctions.
\end{itemize}
A play of the letter game thus generates a word $w$ and a path $\rho$ of $\A$ on $w$. Eve wins this play if either $w\notin L(\A)$ or $\rho$ is accepting in $\A$.

Adam's letter game is similar, except that Eve chooses letters and Adam wins if either $w\in L(\A)$ or the path $\rho$ is rejecting.
\end{definition}
A more formal definition is given in \cref{ap:Alternating}.

\begin{definition}[GFG automata~\cite{BL19}]
An automaton $\A$ is \emph{$\EGFG$} if Eve wins her letter game; it is \emph{$\AGFG$} if Adam wins his letter game. Finally, $\A$ is \emph{GFG} if it is both $\EGFG$ and $\AGFG$.
\end{definition}

As shown in \cite[Theorem~8]{BL19}, an automaton $\A$ is GFG if and only if it is indeed ``good for playing games'', in the sense that its product with every game whose winning condition is $L(\A)$ preserves the winner of the game.

\begin{example}
The automaton in~\Cref{ex:alt} accepts the language $L$ of words in which $a$ occurs finitely often and $c$ occurs infinitely often. Here Eve loses her letter game: Adam can play $c$ until Eve takes the transition to the second state, and then play $a$ followed by $c^\omega$. 
Conversely, Eve wins Adam's letter game: her strategy is to play $b$, take the transition to the second state an keep playing $b$ until Adam takes the transition into the third state, after which she plays $c$ once and then $b^\omega$. 
 This automaton is neither $\EGFG$ nor $\AGFG$, and taking its product with games with $L$ as winning condition does not preserve the winner of the game.
 
In contrast, the automaton in~\Cref{ex:alt-gfg} is $\AGFG$ but not $\EGFG$. Indeed, Adam's winning strategy in his letter game is to resolve the conjunction from the middle state by always moving to the right-hand state when Eve plays $b$. This forces Eve to choose between playing $c$ infinitely many times (in which case, the word is in the language) or letting Adam build a rejecting run. Taking its product with \emph{one-player games} with winning condition $L$ preserves the winner whenever Eve is the player controlling all positions. However, this is not the case for one-player games where Adam is the sole player.

\end{example}

\begin{figure}
\begin{center}
 \begin{tikzpicture}[->,>=stealth',shorten >=1pt,auto,node distance=2.8cm,semithick]
  \tikzstyle{every state}=[text=black]

  \node[initial,state] (A)              {$\vee$};
  \node[state,accepting] (B) [right of=A] {$\wedge$};
  \node[state] (C) [right of=B] {};

  \path (A) edge [] node [above] {$b,c$} (B)
  		(B) edge [] node [above] {$b$} (C)
  		(C) edge [bend left] node [below] {$c$} (B)
  		(A) edge [loop above] node {$a,b,c$} (A)
  		(B) edge [loop above] node {$b,c$} (B)
  	 	(C) edge [loop above] node {$b$} (C);
\end{tikzpicture}
\caption{Alternating coB\"uchi $\AGFG$ automaton accepting words over $\{a,b,c\}$ in which $a$ occurs finitely often and $c$ occurs infinitely often.}\label{ex:alt-gfg}
\end{center}
\end{figure}
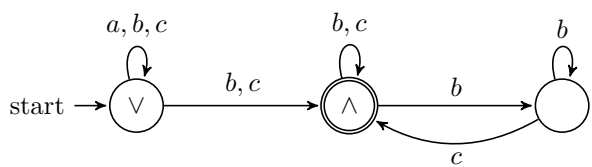
\subsection{Alternating GFG vs. Nondeterministic and Universal Ones}


We show in this section that alternating GFG automata can be more succinct than both nondeterministic and universal GFG automata.

\begin{lemma}
\label{lem:Cn_family}
There is a family $(\C_n)_{n\in\Nat}$ of alternating GFG $\{0,1,2\}$-parity automata of size linear in $n$ over a fixed alphabet, such that every nondeterministic GFG parity automaton and universal GFG parity automaton for $L(\C_n)$ is of size $2^{\Omega(n)}$.
\end{lemma}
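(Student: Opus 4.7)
The plan is to build $\C_n$ by deterministically branching on the first letter of the input into a universal sub-automaton on one side and a nondeterministic sub-automaton on the other, each tailored to be hard for the opposite mode. Concretely, I would set $L(\C_n) = a\cdot L_n^\forall \,\cup\, b\cdot L_n^\exists$, where $L_n^\forall$ admits a linear-size universal GFG $\{0,1,2\}$-parity automaton but requires $2^{\Omega(n)}$ states in every nondeterministic GFG parity automaton, and $L_n^\exists$ is the symmetric counterpart with the roles of nondeterministic and universal swapped. Over a fixed alphabet these sublanguages are encoded using block markers; natural candidates are Streett-like conjunctions (resp.\ disjunctions) of $n$ independent obligations, set up so that the universal (resp.\ nondeterministic) automaton parallelises the $n$ threads while any automaton of the opposite mode must commit to an exponentially sized subset.

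The automaton $\C_n$ itself has an initial state that on letter $a$ deterministically transitions to the initial state of the universal sub-automaton for $L_n^\forall$, and on letter $b$ to that of the nondeterministic sub-automaton for $L_n^\exists$. This yields $O(n)$ states and uses priorities $\{0,1,2\}$. For GFGness, the initial transition is deterministic, so after one move Eve's (resp.\ Adam's) letter game on $\C_n$ reduces to her (his) letter game on whichever sub-automaton was entered. Eve wins her letter game trivially on the universal sub-automaton (she makes no choices, and whenever the input is in $L_n^\forall$ every path accepts) and wins on the nondeterministic sub-automaton because the latter is $\EGFG$ by assumption; Adam is handled symmetrically using that the universal sub-automaton is $\AGFG$. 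Hence $\C_n$ is both $\EGFG$ and $\AGFG$.

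The lower bounds are obtained by projecting any candidate automaton to a single branch. Suppose $\N$ is a nondeterministic GFG parity automaton for $L(\C_n)$. The sub-automaton $\N_a$ of $\N$ reachable after reading $a$ recognises $L_n^\forall$, and is itself $\EGFG$ because Eve's winning letter-game strategy on $\N$, conditioned on Adam opening with $a$, witnesses the $\EGFG$-ness of $\N_a$. Hence $|\N|\geq|\N_a|\geq 2^{\Omega(n)}$. The lower bound for universal GFG automata is symmetric via the $b$-branch and $L_n^\exists$.

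The main obstacle is exhibiting $L_n^\forall$ and $L_n^\exists$ with genuine exponential gaps against GFG nondeterministic and universal automata, rather than merely against deterministic ones, since GFG automata can themselves be exponentially succinct over deterministic ones. I would prove the nondeterministic GFG lower bound for $L_n^\forall$ by an adversary argument inside Eve's letter game: construct an exponential family of prefixes encoding all $n$-bit vectors together with tailored continuations that force the states reached after distinct prefixes to be distinct in any winning positional strategy of Eve. Ensuring that the full construction stays within $\{0,1,2\}$-parity over a fixed alphabet will require a careful block-encoding of the $n$ parallel obligations and a priority assignment that keeps each obligation separable.
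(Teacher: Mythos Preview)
Your architectural skeleton---a deterministic first-letter branch into two sub-automata, one nondeterministic and one universal---matches the paper exactly, and your argument that $\C_n$ is GFG is correct. The genuine gap is in the lower bound. You propose to exhibit two unrelated languages $L_n^\forall$ and $L_n^\exists$, each hard for the opposite branching mode, and you acknowledge that proving a $2^{\Omega(n)}$ lower bound \emph{against GFG automata} (not merely deterministic ones) is the main obstacle; your suggested route is an adversary argument in the letter game forcing distinct states after $2^n$ prefixes. That argument is left entirely at the level of intent, and there is no indication that the Streett-like candidates you mention actually resist small GFG nondeterministic parity automata---GFG lower bounds are notoriously delicate, and a Myhill--Nerode-style prefix separation does not directly transfer.

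The paper sidesteps this obstacle with a single idea you are missing: take the two sub-languages to be $L_n$ and its \emph{complement} $\overline{L_n}$, where $L_n$ is the language of the known family of $n$-state GFG coB\"uchi automata from~\cite{KS15} whose smallest equivalent DPW is exponential. The nondeterministic branch is $\A_n$ itself and the universal branch is its dual $\B_n$, both of linear size. For the lower bound, suppose a GFG UPW $\E_n$ recognises $L(\C_n)$; restrict to a state reached after $a$ to get a GFG UPW for $L_n$, dualise to a GFG NPW for $\overline{L_n}$, and then invoke \cite[Theorem~4]{BKKS13}: a GFG NPW for $L_n$ together with a GFG NPW for $\overline{L_n}$ yields a DPW for $L_n$ of size the product. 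Since $|\A_n|=n$ and every DPW for $L_n$ is exponential, the complement side---hence $\E_n$---must be exponential. The nondeterministic case is symmetric. Thus the deterministic lower bound from~\cite{KS15} is leveraged into a GFG lower bound via the product theorem, and no direct adversary argument is needed.
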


\begin{proof}

From~\cite{KS15}, there is a family $(\A_n)_{n\in\Nat}$ of GFG-NCWs with $n$ states over a fixed alphabet $\Sigma$, such that every DPW for $L_n = L(\A_n)$ is of size $2^{\Omega(n)}$. 
For every $n\in\Nat$, let $\B_n$ be the dual of $\A_n$, so $\B_n$ is a UBW accepting $\overline{L_n}$. 
We build an APW $\C_n$ over $\Sigma$ of size linear in $n$, 
by setting its initial state to move to the initial state of $\A_n$ when reading the letter $a\in\Sigma$ and to the initial state of $\B_n$ when reading the letter $b\in\Sigma$. 
The acceptance condition of $\C_n$ is a parity condition with priorities $\{0,1,2\}$: accepting transitions of $\A_n$ are assigned priority $0$, and accepting transitions of $\B_n$ priority $2$. Other transitions have priority $1$.

The automaton $C_n$ is represented below:

\begin{center}
\includegraphics[scale=.5]{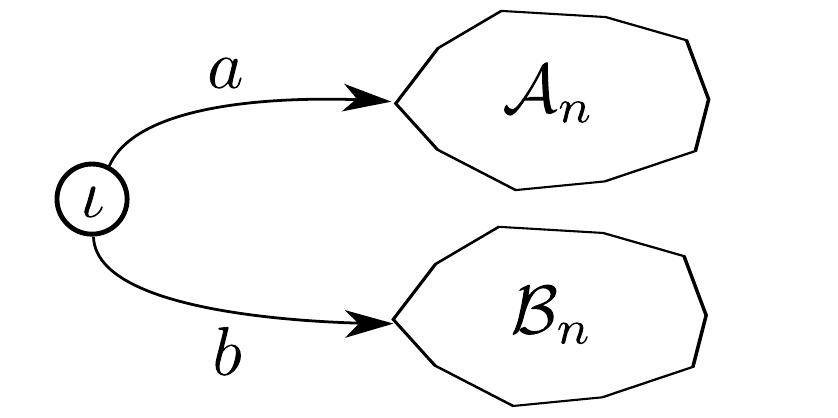}
\end{center}

Observe that $L(\C_n) = a L_n \cup b \overline{L_n}$, and that $\C_n$ is GFG: its initial state has only deterministic transitions, and over the $\A_n$ and $\B_n$ components, the strategy to resolve the nondeterminism and universality, respectively, follows the strategy to resolve the nondeterminism of $\A_n$, which is guaranteed due to $\A_n$'s GFGness.

Consider a GFG UPW $\E_n$ for $L(\C_n)$, and let $q$ be a state to which $\E_n$ moves when reading $a$, according to some strategy that witnesses $\E_n$'s GFGness. Then $\E_n^q$ is a GFG UPW for $L_n$.  Its dual is therefore a GFG NPW $\E'_n$ for $\overline{L_n}$.

Since $\A_n$ is a GFG NPW for $L_n$, by~\cite[Theorem~4]{BKKS13} we obtain a DPW for $L_n$ of size $|\A_n| |\E'_n|$. By choice of $L_n$, this DPW must be of size $2^{\Omega(n)}$, and since $\A_n$ is of size $n$, it follows that $\E'_n$, and hence $\E_n$, must be of size $2^{\Omega(n)}$. By a symmetric argument, every GFG NPW for $L(\C_n)$ must also be of size $2^{\Omega(n)}$.
\end{proof}

Informally, the language $L_n$ above describes a set of threads, of which at least one eventually satisfies a safety property. Then, the above construction can be understood as describing a property of reactive systems where, depending on the input, the system guarantees either that there is a thread that eventually satisfies a safety property, or that all threads satisfy a liveness (B\"uchi) property. The GFG alternating automaton can then be used to solve in polynomial time games with such languages as winning condition, for example in the context of synthesis: the product of the game arena and the alternating automaton for $L_n$ is a parity game with $3$ priorities with the same winner as the original game. In contrast, a DPW, GFG NPW and GFG UPW for the same language would all be exponentially larger.

\subsection{Deciding Half-GFGness}

In order to decide GFGness, it is enough to be able to decide the $\EGFG$ property on the automaton and its dual.
A natural first approach is therefore to study the complexity of deciding whether an APW is $\EGFG$.
Yet, we will show that already on finite words, this problem is \pspace-hard, while we conjecture that deciding GFGness is in \ptime.

\begin{lemma}
\label{lem:EGFG_PSPACE}
Deciding whether an AFA is $\EGFG$ is \pspace-hard.
\end{lemma}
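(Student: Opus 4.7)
The plan is to exhibit a polynomial-time reduction from the PSPACE-complete problem of \emph{NFA universality} on finite words: given an NFA $\N$ over $\Sigma$, decide whether $L(\N) = \Sigma^{*}$. From $\N$ I would construct a polynomial-sized AFA $\A$ such that $\A$ is $\EGFG$ if and only if $L(\N) = \Sigma^{*}$.

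The construction hinges on a transition gadget that couples Eve's disjunctions with Adam's conjunctions in a nontrivial way. After Adam has read some prefix followed by a fresh trigger letter $\#\notin\Sigma$, the one-step arena will present Eve with a disjunction that amounts to an online guess about whether the remaining suffix lies in $L(\N)$; Adam's conjunction, inside each of Eve's disjuncts, then dispatches the continuation either to a simulation of $\N$ (whose disjunctions Eve will then have to resolve online) or to a simulation of its dual $\overline{\N}$ recognising $L(\N)^{c}$, choosing the branch that would penalise a wrong guess. The language $L(\A)$ will be tuned to contain all well-formed words of the form $u\#v$ with $u,v\in\Sigma^{*}$, so that Eve cannot exploit `$w\notin L(\A)$' for a free win.

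First I would give the construction precisely and verify it is of size polynomial in $|\N|$. The correctness proof then splits into two directions. For soundness, if $L(\N)\neq\Sigma^{*}$, Adam's strategy is to play a witness suffix $v\notin L(\N)$ after any prefix, and to select his conjunction so as to route the path through the branch that exposes the mismatch between Eve's online commitment and the actual membership of $v$; since by design $u\#v\in L(\A)$, Eve loses. For completeness, if $L(\N)=\Sigma^{*}$, then $L(\overline{\N})=\emptyset$: I would describe a uniform online strategy for Eve that always commits to the `$v\in L(\N)$' side, so that the $\overline{\N}$-branch is automatically on a word not in $L(\overline{\N})$ (giving Eve the winning condition for free on that side), while the $\N$-branch will be tuned to admit a trivial online resolution whenever $L(\N)=\Sigma^{*}$.

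The main obstacle will be designing the gadget so that Eve's commitment is genuinely tested by Adam's conjunction, and so that the $\N$-simulation required of Eve on the `yes'-side does not itself demand $\N$ to be $\EGFG$ as an NFA (which is a strictly weaker, polynomial-time decidable property). This is precisely the place where the subtle interplay between nondeterminism and universality advertised in the introduction is exploited: Adam's conjunction must close off all easy escape routes such as a trivial accepting sink unless the universality of $\N$ makes them semantically free to take. Getting this interplay right, and in particular ensuring that any Eve strategy that avoids the $\N$-simulation is refutable by Adam via his conjunctions when $L(\N)\neq\Sigma^{*}$, is the technical crux of the proof.
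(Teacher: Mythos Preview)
Your proposal is not yet a proof: the gadget is never specified, and you explicitly flag ``the main obstacle'' (avoiding that Eve has to resolve $\N$'s nondeterminism online) without resolving it. Worse, your completeness argument contains a concrete error. You write that when $L(\N)=\Sigma^{*}$ and Eve commits to the ``$v\in L(\N)$'' side, the $\overline{\N}$-branch is ``on a word not in $L(\overline{\N})$, giving Eve the winning condition for free''. That is backwards. The winning condition of Eve's letter game refers to $L(\A)$, not to the language of a subautomaton; and you have fixed $L(\A)$ to contain all well-formed $u\#v$. Inside $\overline{\N}$ the choices are Adam's, and $L(\overline{\N})=\emptyset$ means precisely that Adam can force a \emph{rejecting} path there. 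So if Adam's conjunction can route the play into $\overline{\N}$, Eve loses. The same difficulty reappears on the $\N$-branch: for Eve to survive there she must resolve $\N$ online, which is exactly the obstacle you acknowledge and do not solve.

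The paper's proof avoids all of this with a much simpler idea. From an NFA $\N$ over $\{a,b\}$, take its dual $\overline{\N}$ (a universal automaton for $L(\N)^{c}$) and prefix it with a tiny non\=/GFG gadget: from the initial state, after reading the first letter, Eve must \emph{guess the second letter}; a wrong guess leads to a rejecting sink $\bot$, a correct guess enters $\overline{\N}$. The resulting AFA $\B$ has $L(\B)=\Sigma^{2}\cdot L(\overline{\N})$. If $L(\N)=\Sigma^{*}$ then $L(\overline{\N})=\emptyset$, hence $L(\B)=\emptyset$ and $\B$ is trivially $\EGFG$. Otherwise pick $u\in L(\overline{\N})$: Adam plays any first letter, then the letter contradicting Eve's committed guess (forcing $\bot$), then $u$; the word is in $L(\B)$ but the path rejects, so $\B$ is not $\EGFG$. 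Two points make this work where your plan stalls: the construction uses only $\overline{\N}$, so Eve never faces $\N$'s nondeterminism; and it lets $L(\B)$ depend on $\N$ (indeed collapse to $\emptyset$ in the universal case), rather than fixing it in advance, so that the ``$w\notin L(\B)$'' disjunct does all the work in the completeness direction.
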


\begin{proof}
We reduce from NFA universality: starting from an NFA $\A$, we build an AFA $\B$ based on the dual of $\A$, with an additional non-GFG choice to be resolved by Eve. This AFA $\B$ is $\EGFG$ if and only if $L(\B)=\emptyset$, which happens if and only if $L(\A)=\Sigma^*$.  We crucially use the fact that $\B$ is not necessarily $\AGFG$.

Let $\A$ be an NFA over an alphabet $\Sigma=\{a,b\}$ and $\bar\A$ its dual. We want to check whether $L(\A)=\Sigma^*$.
We build an AFA $\B$, as depicted below, by first making Eve guess the second letter. If her guess is wrong, the automaton proceeds to a rejecting sink state $\bot$. Otherwise, it proceeds to the initial state of $\bar{A}$. The size of $\B$ is linear in the size of $\A$.
\begin{center}
\includegraphics[scale=.5]{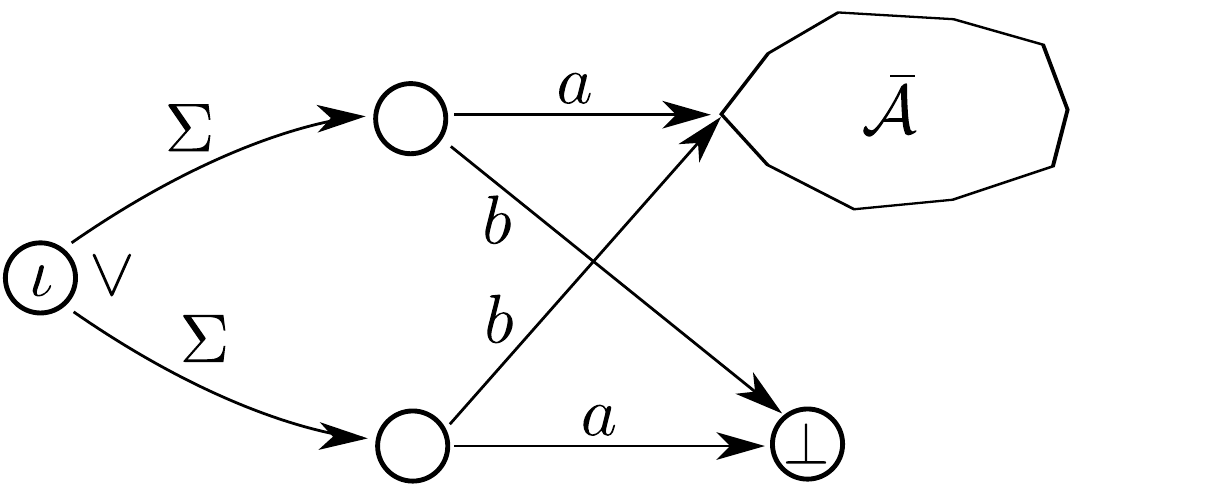}
\end{center}

If $L(\bar{A})=\emptyset$, then $L(\B)=\emptyset$, so $\B$ is trivially $\EGFG$.
However, if there is some $u\in L(\bar{\A})$, then Adam has a winning strategy in Eve's letter game on $\B$. This strategy consists of playing $a$, then playing the letter that brings Eve to $\bot$, and finally playing $u$. The resulting word is in $L(\B)=\Sigma^2L(\A)$, so this witnesses that $\B$ is not $\EGFG$.
We obtain that $L(\A)=\Sigma^* \Leftrightarrow L(\bar{A})=\emptyset\Leftrightarrow\B\text{ is }\EGFG$, which is the wanted reduction.
\end{proof}

For B\"uchi automata, and so in particular for finite words, we can give an \exptime algorithm for this problem.

\begin{lemma}
\label{lem:EGFG_EXPTIME}
Deciding whether an ABW is $\EGFG$ is in \exptime.
\end{lemma}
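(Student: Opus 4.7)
The plan is to reduce whether Eve wins her letter game on $\A$ to solving a product game of single-exponential size with a one-pair Streett winning condition, which admits a polynomial-time solver.

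First, I would apply the Miyano-Hayashi construction to $\A$ to obtain a nondeterministic B\"uchi automaton $\A'$ of size $2^{O(n)}$ with $L(\A')=L(\A)$. The states of $\A'$ are pairs $(S,O)$ with $O\subseteq S\subseteq Q$, the nondeterminism at each step chooses a set $T$ satisfying $\bigwedge_{q\in S}\delta(q,a)$, and the B\"uchi condition is the usual breakpoint mechanism ($O=\emptyset$).

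Next, I would build a product game $G^*$ whose positions are pairs $(q,q')$ with $q\in Q$ and $q'$ a state of $\A'$. A round from $(q,q')$ proceeds by Adam picking a letter $a$, the players resolving the one-step arena of $\delta(q,a)$ as in the letter game (Eve on disjunctions, Adam on conjunctions), and Adam additionally resolving the nondeterministic $\A'$-transition on $a$ to update $q'$. Eve's winning objective in $G^*$ is the one-pair Streett condition: if the induced $\A'$-run visits accepting transitions infinitely often, then so does the induced $\A$-path.

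I would then show that Eve wins $G^*$ iff she wins her letter game. The forward direction is immediate: a winning letter-game strategy $\sigma$ lifts to $G^*$ by ignoring the $\A'$-component; for any Adam $G^*$-strategy, either Adam's $\A'$-run is rejecting (Eve trivially wins), or it is accepting and therefore witnesses $w\in L(\A)$, in which case $\sigma$ forces the $\A$-path to be B\"uchi. For the reverse direction, taking the contrapositive, a winning letter-game strategy $\tau$ for Adam lifts to a $G^*$-strategy by having Adam additionally track an accepting $\A'$-run on the evolving prefix; such a run exists against every Eve response because $\tau$ guarantees $w\in L(\A)$, and Adam can maintain it by staying in the B\"uchi winning region of $\A'$.

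Finally, since $G^*$ has at most $n\cdot 3^n=2^{O(n)}$ positions and one-pair Streett games are solvable in polynomial time in the game size, the whole procedure runs in $2^{O(n)}$ time, i.e.\ in \exptime. The main obstacle is the reverse direction of the equivalence, where one must produce an Adam $G^*$-strategy that uniformly tracks an accepting $\A'$-run against all Eve responses; the key point is that $\tau$ ensures $w\in L(\A)$ regardless of Eve's play, so an accepting $\A'$-run is always available to be followed greedily via the B\"uchi winning region of~$\A'$.
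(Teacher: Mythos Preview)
Your forward direction is fine, but the reverse direction (``Eve wins $G^*$ $\Rightarrow$ Eve wins her letter game'') has a genuine gap. In your contrapositive argument you fix Adam's winning letter-game strategy $\tau$ and ask him to additionally build an accepting $\A'$-run ``by staying in the B\"uchi winning region of $\A'$''. The problem is that the word $w$ is not fixed: Adam's letters under $\tau$ react to Eve's disjunction choices in $\A$, so different Eve plays yield different words, and Adam must commit to $\A'$-transitions online without knowing which of these words will materialise. Guaranteeing an accepting $\A'$-run against every such word is precisely the GFG property for $\A'$. Merely staying in the set of $\A'$-states that admit some accepting continuation does not yield an accepting run; that is exactly the difference between nondeterminism and GFGness. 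Worse, the equivalence actually used in the paper says that $\A$ is $\EGFG$ iff the breakpoint NBW is GFG, so in the very case you are treating ($\A$ not $\EGFG$) the automaton $\A'$ is \emph{not} GFG and there is no reason Adam can produce such a run. Your equivalence between $G^*$ and the letter game may or may not hold, but the argument you give does not establish it.

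For comparison, the paper's proof is shorter and sidesteps this issue entirely: it observes (citing \cite{BL19}) that $\A$ is $\EGFG$ iff the breakpoint NBW $\B$ is GFG, and then checks GFGness of $\B$ directly with the polynomial-time algorithm of \cite{BK18}. Both proofs use the Miyano--Hayashi automaton, but the paper tests its GFGness as a black box, whereas you try to embed it as an Adam-controlled witness inside a product game---and it is exactly the non-GFGness of $\B$ in the bad case that breaks the witness r\^ole you assign to it. If you want to salvage a direct game-based proof, replacing $\A'$ by a \emph{deterministic} automaton for $L(\A)$ makes your equivalence immediate, but then you must control the size and index of that deterministic automaton to stay within \exptime.
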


\begin{proof}
It is shown in~\cite[Lemma~23]{BL19} that removing alternation from an ABW $\A$ using the breakpoint construction~\cite{MH84} yields an NBW $\B$ such that if $\A$ is $\EGFG$ then $\B$ is GFG. Moreover, the converse also holds: if $\B$ is GFG then $\A$ is $\EGFG$, since playing Eve's letter game in $\B$ is more difficult for Eve than playing it in $\A$.
This means that starting from an ABW $\A$, we can build an exponential size NBW $\B$ via breakpoint construction, and test whether $\B$ is GFG via the algorithm from~\cite{BK18}, in time polynomial with respect to $\B$. Overall, this yields an \exptime algorithm deciding whether $\A$ is $\EGFG$.
\end{proof}



\section{Determinisation of Alternating GFG Parity Automata}
\label{sec:Determinisation}

In this section we provide a procedure that, given an~alternating GFG parity automaton, produces an equivalent deterministic parity automaton with single\=/exponentially many states. To do so, we first provide an alternation-removal procedure that preserves GFG status. Then, we apply this procedure to both the input automaton and its complement and use the GFG strategies in these two automata to determinise the input automaton. Our proofs, in~\cref{ap:Determinisation}, rely on some analysis of when GFG strategies can use the history of the word, rather than the full history of the play (which also includes the choices of how to resolve the nondeterminism and universality), and on the memoryless determinacy of parity games.

Our method for going from alternating to nondeterministic automata is similar to that of Dax and Klaedtke~\cite{DK08}: they take a nondeterministic automaton that recognises the universally\=/accepting words in $(\boxes_A)^\omega$ and add nondeterminism that upon reading a letter $\letter\in \Sigma$ chooses a box in $\boxes_A$ over \letter. Yet in our approach, in order to guarantee that the outcome preserves GFGness, the intermediate automaton is deterministic.

\subsection{Alternation Removal in GFG Parity Automata}


\begin{restatable}{theorem}{thmexpgfgdealt}
\label{thm:exp-gfg-dealt}
Consider an alternating parity automaton $\A$ with $n$ states and index $k$. There exists a nondeterministic parity automaton $\boxA$ with $2^{O(nk \log nk)}$ states that is equivalent to $\A$ such that if $\A$ is GFG then $\boxA$ is also GFG.
\end{restatable}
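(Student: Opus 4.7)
The plan is to factor the determinisation through an intermediate deterministic parity automaton $\D$ that recognises the universally\=/accepting box sequences of $\A$, and only then to reintroduce nondeterminism by letting $\boxA$ guess a box on each letter. This is the crucial refinement over the Dax--Klaedtke construction, in which the intermediate automaton is nondeterministic and GFGness is not preserved.

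Concretely, I would view the set of universally\=/accepting box sequences as the language of a universal parity automaton $\mathcal{U}$ over alphabet $\boxes_\A$ with state set $Q$, initial state $\iota$, transition from $q$ reading $\beta\in\boxes_{\A,\letter}$ equal to $\bigwedge\{q' : (q,\letter,q')\in\beta\}$, and priorities inherited from $\alpha$; by \cref{def:path-in-boxes}, $L(\mathcal{U})$ is exactly the set of universally\=/accepting box sequences. Determinising $\mathcal{U}$ (by dualising, applying Piterman's construction to the resulting NPW, and dualising back) yields a DPW $\D$ with $2^{O(nk\log nk)}$ states. I then define $\boxA$ with the same states, initial state and priorities as $\D$, alphabet $\Sigma$, and transitions: from $s$ reading $\letter$, $\boxA$ nondeterministically picks $\beta\in\boxes_{\A,\letter}$ and moves to $\delta_\D(s,\beta)$. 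Language equivalence follows from positional determinacy of the parity model\=/checking game: $w\in L(\A)$ iff Eve has a positional strategy, equivalently a sequence of boxes $\beta_0\beta_1\ldots$ with $\beta_i\in\boxes_{\A,a_i}$ that is universally accepting, equivalently $\beta_0\beta_1\ldots\in L(\mathcal{U})=L(\D)$, equivalently $\boxA$ has an accepting run on $w$.

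The hard part is showing GFG preservation. Since $\boxA$ is nondeterministic, its GFGness amounts to $\EGFG$ness, i.e.,\ to Eve choosing a box online as a function of the word read so far. The natural idea is to translate Eve's $\EGFG$\=/strategy $\sigma$ for $\A$ into a box\=/strategy for $\boxA$, but a priori $\sigma$ may depend on the full history of plays in one\=/step arenas (Adam's conjunctive choices and the resulting $\A$\=/state history), whereas in $\boxA$ only the word history is available, since the $\D$\=/state after $i$ rounds is a deterministic function of the boxes Eve has played. The technical core, and the main obstacle, is therefore to show that $\EGFG$ness already admits a \emph{word\=/positional} winning strategy: a map $\Sigma^+\to\boxes_\A$ whose box at step $i$ depends only on the letters read, and which produces a universally\=/accepting box sequence whenever $w\in L(\A)$. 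I would obtain this by recasting $\A$'s letter game as a ``box game'' in which Eve directly plays a box and Adam picks a successor inside it, taking the product of this game with a DPW for $L(\A)$ to turn the language\=/dependent winning condition into a finite\=/arena parity condition, invoking memoryless determinacy, and analysing the resulting positional strategy to eliminate its dependence on the $\A$\=/state in favour of the word\=/based $L(\A)$\=/automaton component. Feeding the resulting word\=/positional $\EGFG$\=/strategy into $\boxA$ then yields a winning letter\=/game strategy, since by construction $\D$ accepts exactly the universally\=/accepting box sequences produced.
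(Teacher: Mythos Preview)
Your construction of the intermediate deterministic automaton and of $\boxA$ is exactly the paper's: the universal automaton $\mathcal{U}$ you describe is the dual of the nondeterministic automaton $\S$ the paper builds for the complement, so the resulting $\D$ and $\boxA$ coincide, and your language\=/equivalence argument via positional determinacy of the model\=/checking game matches \cref{lem:PositionalStrategiesAndRuns}.

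For GFG preservation your plan is also essentially the paper's, but your phrasing obscures one point. The paper works on the \emph{expanded} letter game (arena $R^\ast_{A,\Sigma}\times\A$, positions indexed by $\Sigma^\ast\times Q$), shows its winning condition can be written as a Rabin condition by taking the product with a DPW for $L(\A)^c$, and then invokes Rabin positional determinacy for Eve (\cref{thm:ExpandedGamePositional}); a positional strategy on positions $(w,q)$ is then \emph{already} a map $\Sigma^+\to\boxes_\A$, since a box is nothing but a local strategy specified for every current state $q$. Your finite\=/arena product with a DPW for $L(\A)$ reaches the same endpoint, but note two things: first, the resulting condition is a disjunction of two parity conditions, hence Rabin, not parity---you need Rabin (not parity) positional determinacy here; second, there is no genuine ``elimination of the $\A$\=/state dependence'' to perform. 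The dependence on the \emph{current} $\A$\=/state $q$ is exactly what a box records, and positionality already removes the dependence on Adam's \emph{past} choices. In your box\=/game formulation you end up with $(d,q)\mapsto\beta_{d,q}$ and must collate rows into a single box $\beta_d$; this works, but it is an artefact of having Eve play full boxes rather than one\=/step local strategies. If you let Eve resolve only the disjunctions from the current state (as in the paper's letter game), the box per word\=/prefix falls out directly with no post\=/processing.
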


In~\cref{sec:deciding}, where we discuss decision procedures, we will show that $\boxA$ is GFG \textit{exactly} when $\A$ is $\EGFG$. For now, the rest of this section is devoted to the proof of~\cref{thm:exp-gfg-dealt}, of which a detailed version can be found in~\cref{app:alt-rem-gfg-rabin}.

\begin{restatable}{lemma}{lempropofd}
\label{lem:prop-of-d}
Consider an alternating parity automaton $\A$ with $n$ states and index $k$. Then there exists a deterministic parity automaton $\B$ with $2^{O(nk \log nk)}$ states over the alphabet $\boxes_\A$ that recognises the set of universally-accepting words for $\A$. If $\A$ is a B\"uchi automaton, then $\B$ can also be taken as B\"uchi, and in general the parity index of the automaton $\B$ is linear in the number of transitions~of~$\A$.
\end{restatable}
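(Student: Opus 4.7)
The plan is to realise the universally-accepting box sequences as the language of a naturally-defined universal parity automaton $\mathcal{U}_\A$, and then obtain $\B$ by dualising, determinising, and complementing. Concretely, $\mathcal{U}_\A$ will have alphabet $\boxes_\A$, state set $Q$, initial state $\iota$, and from a state $q$ on a box $b\in\boxes_{\A,a}$ its transition condition is the conjunction $\bigwedge_{(q,a,q')\in b} q'$, with the transition to $q'$ inheriting the priority $\alpha(q,a,q')$ from $\A$. By~\cref{def:path-in-boxes}, a box sequence $\pi$ is universally accepting precisely when every path of $\pi$ is accepting in $\A$, which is exactly the acceptance condition of $\mathcal{U}_\A$; thus $L(\mathcal{U}_\A)$ is the desired language.

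Next, I dualise $\mathcal{U}_\A$ via~\cref{def:auto-compl} into a nondeterministic parity automaton $\N_\A$ on the same $n$ states, whose language is the complement of $L(\mathcal{U}_\A)$, i.e.\ the box sequences admitting at least one rejecting path. I then apply a standard determinisation of nondeterministic parity automata, for instance Piterman's construction, to $\N_\A$, obtaining a deterministic parity automaton $\D$ with $2^{O(nk\log nk)}$ states and $O(n)$ priorities, and finally complement $\D$ by a single priority shift to obtain~$\B$. Since $n$ is at most the number of transitions of $\A$, the parity index of $\B$ is linear in that number, as required.

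For the Büchi special case, $\N_\A$ is a nondeterministic co-Büchi automaton; the standard exponential determinisation of nondeterministic co-Büchi automata yields a deterministic co-Büchi automaton of size $2^{O(n)}$, which upon complementation by the priority shift becomes a deterministic Büchi automaton, giving the Büchi part of the statement.

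The main obstacle, beyond routine bookkeeping, is choosing a determinisation procedure whose output simultaneously fits within the state bound $2^{O(nk\log nk)}$ and has parity index linear in the state count of the input (and not in its index); Piterman-style constructions satisfy both requirements, but the parameters must be checked carefully. A secondary point is to verify that the universal-automaton semantics of $\mathcal{U}_\A$ coincides on the nose with the notion of universally-accepting sequence from~\cref{def:path-in-boxes}, including the treatment of the initial state when the first box is read from $\iota$.
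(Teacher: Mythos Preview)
Your proposal is correct and essentially identical to the paper's proof. The paper skips the intermediate universal automaton $\mathcal{U}_\A$ and directly defines the nondeterministic parity automaton $\S$ (your $\N_\A$) that guesses a rejecting path, with the dual acceptance condition; it then determinises via Piterman and complements, exactly as you do, and handles the B\"uchi case via the co-B\"uchi determinisation observation.
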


\ProofSketch
We first construct a nondeterministic parity (resp.\ coB\"uchi) automaton over the alphabet $\boxes_\A$ that recognises the complement of the set of universally-accepting words for $\A$. This automaton is easy to build: it guesses a path that is not accepting, and has the dual acceptance condition to $\A$.
We then obtain the automaton $\B$ by determinising and complementing this automaton. 
\end{proof}

\NotNeeded{
\begin{proof}
Notice that it is easy to construct a nondeterministic Streett (resp.\ parity) automaton $\S$ over the alphabet $\boxes_\A$ that recognises the complement of the set of universally-accepting words for $\A$---it is enough to guess a path that is not accepting, and have the acceptance condition that is the dual of $\A$'s condition. 
Formally, for an alternating Rabin (resp.\ parity) automaton $\A=\tuple{\Sigma, Q, \iota, \delta, \alpha}$, we define the nondeterministic Street (resp.\ parity) automaton $\S=\tuple{\boxes_\A, Q, \iota, \delta_\S, \overline{\alpha}}$, where $\overline{\alpha}$ is the dual of $\alpha$ and $\delta_\S$ is defined as follows.
For every states $q,q'\in Q$ and box $\beta\in\boxes_\A$, we have $q'\in\delta_\S(q,\beta)$ iff $\tuple{q,q'}\in\beta$.

Now, one can translate $\S$ to an equivalent deterministic parity automaton $\B'$ with $2^{O(nk \log nk)}$ states \cite{Pit07} (resp.\ $2^{O(n \log n)}$ states~\cite{CZ12,SV14}), and then complement the acceptance condition of $\B'$, getting the required automaton $\B$.

Since nondeterministic coB\"uchi automata can be determinised into deterministic coB\"uchi automata, if $\A$ is a B\"uchi automaton, so is $\B$.
\end{proof}
}


We now build the automaton $\boxA$ of \cref{thm:exp-gfg-dealt}. It is the same as the automaton $\B$ of \cref{lem:prop-of-d}, except that the alphabet is $\Sigma$ and the transition function is defined as follows: For every state $p$ of $\B$ and $\letter\in\Sigma$, we have $\delta_{\boxA}(p,\letter):= \bigcup_{\beta\in\boxes_{\tuple{\A,\letter}}} \delta_{\B}(p,\beta)$.

In other words, the automaton $\boxA$ reads a~letter $\letter$, nondeterministically guesses a~box $\beta\in\boxes_{\A,\letter}$, and follows the transition of $\B$ over $\beta$. Thus, the runs of $\boxA$ over a~word $w=w_0w_1w_2\dots\in\Sigma^\omega$ are in bijection with sequences of boxes $(\beta_i)_{i\in\Nat}$ such that $\beta_i\in\boxes_{\A,w_i}$ for all $i\in\Nat$.


Fix an infinite word $w\in\Sigma^\omega$. Our aim is to prove that $w\in L(\A)\Leftrightarrow w\in L(\boxA)$. 

\begin{restatable}{lemma}{lempositionalstrategiesandruns}
\label{lem:PositionalStrategiesAndRuns}
There exists a bijection between positional strategies of Eve in the acceptance game of $\A$ over $w$ and runs of $\boxA$ over $w$. Moreover, a strategy is winning if and only if the corresponding run is accepting.
Thus $L(\A)=L(\boxA)$.
\end{restatable}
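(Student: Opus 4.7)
The plan is to identify, for every round $i$, Eve's positional choices with the data of a single box $\beta_i\in\boxes_{\A,w_i}$, and then observe that sequences of such boxes are in one-to-one correspondence with runs of $\boxA$ on $w$ by construction.

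First, I would unpack the definition of a positional strategy in the model-checking game of $\A$ on $w$. A position in this game, inside round $i$, is a pair consisting of a subformula of some $\delta(q,w_i)$ together with the round index; hence a positional strategy is a function that, for every $i$ and every state $q\in Q$, chooses how to resolve each disjunction appearing in $\delta(q,w_i)$, independently of how the play reached that round and of how Adam chose to resolve conjunctions previously. By definition of $\boxes_{\A,w_i}$, such a choice is exactly a box $\beta_i\in\boxes_{\A,w_i}$: the triples $(q,w_i,q')\in\beta_i$ are those reachable in the one-step arena over $w_i$ from $q$ when Eve plays according to her choices and Adam is free. Conversely, each box $\beta_i\in\boxes_{\A,w_i}$ is, by definition, induced by some local strategy of Eve over the transition conditions for letter $w_i$. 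This yields a bijection between positional strategies $\sigma$ of Eve in the acceptance game of $\A$ over $w$ and sequences of boxes $(\beta_i)_{i\in\Nat}$ with $\beta_i\in\boxes_{\A,w_i}$.

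Next, I would invoke the construction of $\boxA$: upon reading $w_i$ it guesses some box in $\boxes_{\A,w_i}$ and applies the transition of the deterministic automaton $\B$ from \cref{lem:prop-of-d}. Hence runs of $\boxA$ on $w$ are in bijection with sequences of boxes $(\beta_i)_{i\in\Nat}$ with $\beta_i\in\boxes_{\A,w_i}$, and composing with the previous bijection gives the required one between positional strategies of Eve and runs of $\boxA$ on $w$. For the ``winning iff accepting'' part, I would note that the plays consistent with $\sigma$ in the acceptance game are precisely the paths of the corresponding box sequence $\pi=(\beta_i)_i$ (\cref{def:path-in-boxes}); hence $\sigma$ is winning iff $\pi$ is universally accepting. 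Since $\B$ recognises exactly the universally-accepting box sequences over $\A$, this happens iff the unique run of $\B$ on $\pi$, which is the same as the corresponding run of $\boxA$ on $w$, is accepting.

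Finally, to conclude $L(\A)=L(\boxA)$, I would combine the above with positional determinacy of parity games applied to the acceptance game of $\A$ over $w$: $w\in L(\A)$ iff Eve has a winning strategy iff she has a winning positional one, iff (by the bijection) there is an accepting run of $\boxA$ on $w$, iff $w\in L(\boxA)$. The main subtle point here is the first paragraph: carefully justifying that a positional strategy of Eve in the acceptance game factorises round by round into choices that agree, for each fixed $i$, with the data of a single box for all states $q\in Q$ simultaneously — this is essentially where positionality is used to pass from ``strategies per reachable configuration'' to ``a global choice over $Q$'', which is exactly the granularity of boxes.
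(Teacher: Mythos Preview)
Your proposal is correct and follows essentially the same approach as the paper: decompose a positional strategy of Eve in the model-checking game into per-round local strategies on the one-step arenas, identify these with boxes via \cref{prop:choice-to-strat}, and then with runs of $\boxA$ by construction; the ``winning iff accepting'' part goes through universal acceptance of the box sequence exactly as in the paper. Your explicit invocation of positional determinacy of parity games to conclude $L(\A)=L(\boxA)$ is a useful clarification that the paper leaves implicit.
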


\NotNeeded{
\begin{proof}
Consider a run of $\boxA$ over $w$, and observe that it corresponds to a sequence of boxes $\beta_0,\ldots$. Notice that each box $\beta_i$ corresponds to Eve's choices in $\A$ over $w_i$, and therefore provides a positional strategy for Eve in the one-step arena $R_{w_i}\times \A$. The sequence of these choices provides a positional strategy for Eve in $R_w\times \A$. 

Dually, given a positional strategy for Eve in $R_w\times \A$, one can extract a sequence of strategies for Eve in the one-step arenas $R_{w_i}\times \A$, and each of them corresponds to a box $\beta_i$. \cref{prop:choice-to-strat} shows that each path in $\beta_0,\ldots$ corresponds to a play consistent with the constructed strategy and vice versa: each play gives rise to a path.

Now, a run is accepting if and only if the sequence of boxes is universally accepting, which means exactly that all the plays consistent with the corresponding strategy are winning.
\end{proof}
}


\begin{remark}
The above alternation-removal procedure also extends to alternating Rabin automata but fails for alternating Streett automata~$\A$: since Streett games are not positionally determined for Eve, the acceptance game of $\A$ over a word $w$ is not positionally determined for Eve.
\end{remark}

\begin{restatable}{lemma}{lemgfgpreservation}
\label{lem:GFG-preservation}
For an alternating $\EGFG$ parity automaton $\A$, the automaton $\boxA$ is~GFG.
\end{restatable}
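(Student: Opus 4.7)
The plan is to translate Eve's winning strategy $\sigma$ in the letter game of $\A$ into a winning strategy for Eve in the letter game of $\boxA$. At each round, Eve will play a single box in $\boxA$ that encapsulates $\sigma$'s local moves uniformly over all states of $\A$ that a ghost play of $\A$'s letter game could be in.

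I would first arrange $\sigma$ into a form where its choices within each one\=/step arena depend only on the current state of $\A$, the current letter, the current subformula, and a memory state that is itself a function of the word read so far. This can be done by synchronising the letter game with a deterministic parity automaton recognising $L(\A)$: the winning condition ``$w\notin L(\A)$ or $\rho$ is accepting'' then becomes a disjunction of two parity conditions, i.e.,\ a Rabin-style condition, for which Eve has memoryless winning strategies by classical results on Rabin games. The resulting memoryless $\sigma$ factors, for each pair $(m,a)$ of a word-memory state and a letter, into a positional Eve-strategy in the one\=/step arena for $a$; aggregating over all starting states of $\A$ yields a single box $\beta_{m,a}$.

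Eve's strategy in $\boxA$'s letter game is then: maintain the word-memory $m_i$ (updated by each incoming letter), and upon reading $a_i$ play the box $\beta_{m_i, a_i}$. To verify correctness, suppose $w = a_0 a_1 \ldots \in L(\A) = L(\boxA)$ and take any path $\rho = (q_0, a_0, q_1)(q_1, a_1, q_2)\ldots$ through the resulting box sequence. By the definition of $\beta_{m_i, a_i}$, each transition $(q_i, a_i, q_{i+1})$ arises from some Adam play against $\sigma$'s positional prescription in the one\=/step arena from $q_i$ over $a_i$, so $\rho$ is precisely the path produced when $\sigma$ is played against a suitable Adam strategy in $\A$'s letter game on $w$. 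Since $\sigma$ is winning and $w\in L(\A)$, $\rho$ must be accepting in $\A$. Hence every path through the box sequence is accepting, so the sequence is universally accepting, and~\cref{lem:PositionalStrategiesAndRuns} turns this into an accepting run of $\boxA$ on $w$.

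The main obstacle is the preparatory step. Setting up the product of the letter game with a deterministic acceptor for $L(\A)$ so that memoryless determinacy applies, and then carefully verifying that the memory of the resulting strategy genuinely depends only on the word and not on Adam's past one\=/step-arena choices or on the past sequence of $\A$-states, is the delicate part; once it is in place, the remaining argument is a direct bookkeeping between paths through boxes and plays in $\A$'s letter game.
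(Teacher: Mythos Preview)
Your proposal is correct and follows essentially the same approach as the paper. The paper packages the ``make the strategy depend only on the word prefix'' step as positional determinacy of an \emph{expanded} letter game (arena positions indexed by $\Sigma^\ast$), proved by exactly the product-with-a-deterministic-automaton argument you outline; it then extracts the function $\sigma'\colon\Sigma^+\to\boxes_\A$ and shows the resulting box sequence is universally accepting, which is precisely your verification via paths-through-boxes corresponding to plays consistent with $\sigma$.
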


Intuitively, this is because the construction of $\boxA$ preserves the nondeterminism of $\A$. 

\NotNeeded{
\begin{proof}
Let $\stratE$ be a positional winning strategy for Eve in her expanded letter game for $\A$ (over the arena $R^\ast_{A,\Sigma}\times \A$). The proof is based on the construction of the function $\stratE'\colon \Sigma^+\rightarrow \boxes_\A$, see the paragraph before \cref{def:univ-acc-box}.

Consider the following way of resolving the nondeterminism of $\boxA$: after reading $w\in\Sigma^\ast$, when the next letter $\letter\in\Sigma$ is provided, the automaton moves to the state $\delta_\D(p,\beta_{w\letter})$ where $\beta_{w\letter}=\stratE'(w\letter)$. Consider an infinite word $w\in L(\A)$ and let $\beta_0,\ldots$ be the sequence of boxes used to construct the run of $\boxA$ over $w$. Lemma~\ref{lem:strat-to-win-boxes} implies that this sequence is universally accepting and therefore, the constructed run of $\D$ must also be accepting.
\end{proof}
}

\subsection{Single-Exponential Determinisation}
\label{ssec:exp-det-of-alt}

The aim of this section is to prove the following determinisation theorem; see~\cref{ap:ssec:exp-det-of-alt} for a detailed proof.

\begin{restatable}{theorem}{thmdet}
If $\A$ is an alternating parity GFG automaton then there exists a deterministic parity automaton $\D$ that recognises the same language and has size at most exponential in the size of $\A$. Moreover, the parity index of $\D$ is the same as that of $\A$.
\end{restatable}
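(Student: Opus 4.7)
The plan is to apply \cref{thm:exp-gfg-dealt} to both $\A$ and its dual $\overline{\A}$. The dual is itself GFG, because Eve's letter game for $\overline{\A}$ coincides (after exchanging the two players' names) with Adam's letter game for $\A$, and vice versa. This yields two nondeterministic GFG parity automata, $\boxA$ for $L(\A)$ and $\boxAco$ for $L(\A)^\mathrm{c}$, each of size $2^{O(nk\log nk)}$, where $n$ is the number of states and $k$ the parity index of $\A$.

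Since both are GFG nondeterministic parity automata, their nondeterminism can be resolved with memory encoded entirely in their own state spaces---this is exactly how \cref{lem:GFG-preservation} constructs the GFG witness, by lifting a positional winning strategy in the expanded letter game. Call the resulting box\=/choice functions $\sigma$ (for $\boxA$) and $\overline{\sigma}$ (for $\boxAco$). By the box/strategy correspondence underpinning \cref{lem:PositionalStrategiesAndRuns}, the box selected by $\sigma$ encodes at each step a positional Eve strategy in $\A$'s one\=/step arena, and the box selected by $\overline{\sigma}$ encodes a positional Adam strategy in $\A$'s one\=/step arena (because Eve in $\overline{\A}$ plays Adam's role in $\A$). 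Playing these two local strategies jointly from any $q\in Q$ on letter $a$ yields a unique play in the one\=/step arena and thus a unique successor $q_1\in Q$.

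I would then define $\D$ on the state space $Q \times Q_{\boxA} \times Q_{\boxAco}$: from $(q,p,p')$ on letter $a$, the transition of $\D$ goes to $(q_1, p_1, p'_1)$, where $p_1$ is the successor of $p$ in $\boxA$ after reading the box $\sigma(p,a)$, $p'_1$ the successor of $p'$ in $\boxAco$ after reading $\overline{\sigma}(p',a)$, and $q_1$ is the outcome in $\A$'s one\=/step arena at $q$ on $a$ when Eve plays the local strategy coded in $\sigma(p,a)$ and Adam the one coded in $\overline{\sigma}(p',a)$. The priority of this $\D$\=/transition is $\alpha(q,a,q_1)$, directly inherited from $\A$, so $\D$ retains $\A$'s parity index $k$, as required. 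The state count is $n\cdot 2^{O(nk\log nk)}\cdot 2^{O(nk\log nk)}=2^{O(nk\log nk)}$, single\=/exponential in $|\A|$.

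For correctness, observe that when $w\in L(\A)$, $\sigma$ realises Eve's winning strategy in $\A$'s letter game, so the traced path in $\A$ must satisfy $\A$'s parity condition no matter how $\overline{\sigma}$ plays; symmetrically, when $w\notin L(\A)$, $\overline{\sigma}$ realises Adam's winning strategy and the path fails $\A$'s parity condition. Hence $\D$ accepts $w$ iff $w\in L(\A)$. The main obstacle will be to formalise the step that turns a chosen pair of boxes into a single outcome state---that is, to verify that a positional GFG choice of box in $\boxA$ (or $\boxAco$) really translates back into a coherent positional local strategy in $\A$, and that combining an Eve box with an Adam box always determines a unique successor. This is precisely where the announced analysis of when GFG strategies can rely on the history of the word rather than the full history of the play, together with memoryless determinacy of the relevant parity games, has to be invoked.
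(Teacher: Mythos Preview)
Your overall architecture is right---build $\boxA$ and $\boxAco$, track a triple $(q,p,p')$, let the two box choices resolve Eve's and Adam's moves in $\A$'s one\=/step arena, and inherit $\A$'s priorities---but there is a genuine gap at the point you flag yourself as ``the main obstacle''.

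The problematic step is the sentence ``their nondeterminism can be resolved with memory encoded entirely in their own state spaces---this is exactly how \cref{lem:GFG-preservation} constructs the GFG witness''. That is not what \cref{lem:GFG-preservation} does. The GFG witness built there comes from a positional strategy in the \emph{expanded} letter game for $\A$, whose positions are pairs $(w,q)$ with $w\in\Sigma^\ast$; the resulting box choice is a function $\stratE'\colon\Sigma^{+}\to\boxes_\A$ of the whole word read so far, not of the current state of $\boxA$. More generally, the letter game of $\boxA$ has winning condition ``$w\notin L(\A)$ or the run is accepting'', which is not a parity (or Rabin) condition on the arena's transitions, so you cannot invoke positional determinacy to get a strategy $\sigma(p,a)$ depending only on the state $p$ of $\boxA$. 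Without that, your transition function for $\D$ is not well defined on the state space $Q\times Q_{\boxA}\times Q_{\boxAco}$.

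The paper closes exactly this gap with a coupling trick you are missing. Instead of asking for positional GFG strategies in $\boxA$ and $\boxAco$ separately, it forms the game $G(\A)=\big(R_{A,\Sigma}\times\boxAco\big)\times\A$, where in each round Adam plays a letter, Eve resolves the nondeterminism of $\boxAco$ \emph{and} the disjunctions of $\A$, and Adam resolves the conjunctions of $\A$. Eve's winning condition is ``the run of $\boxAco$ is accepting \emph{or} the path in $\A$ is accepting'', a disjunction of two parity conditions and hence Rabin. Now positional determinacy applies, giving Eve a strategy $\stratE$ that depends only on the pair $(p_2,q)$. The dual game $G'(\A)$ gives Adam a positional $\stratA$ on $(p_1,q)$. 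These two positional strategies are what drive $\D$'s transitions; the ``magic'' (the paper says so explicitly) is that neither $\A$ nor $\boxAco$ admits a positional GFG strategy on its own, but each supplies the memory the other needs once you put them in a single Rabin game. Your proposal tries to decouple them and thereby loses precisely the memory that makes the construction work.
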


\begin{remark}
\cref{thm:exp-gfg-dealt} and~\cite[Theorem~4]{BKKS13}, which uses an NRW-GFG and its complement NRW-GFG to obtain a DRW, together give an exponential deterministic parity automaton for $L(\A)$. However, the index of $\A$ might not be preserved. On the other hand, from~\cite[Theorem~19]{BL19} we know that there exists a~deterministic parity automaton equivalent to $\A$ with the same index, but it might have more than exponentially many states. Here we are able to guarantee both the preservation of the index and an exponential upper bound on the size of the deterministic automaton.
\end{remark}

Observe that \cref{thm:exp-gfg-dealt} can be applied both to $\A$ and its dual. Therefore, we can fix a~pair of nondeterministic GFG parity automata $\boxA$ and $\boxAco$ that recognise $L(\A)$ and $L(\A)^\mathrm{c}$ respectively and are both of size exponential in $\A$.
We use the automata $\A$, $\boxA$, and $\boxAco$ to construct two auxiliary games $G(\A)$ and $G'(\A)$ .\\

The game $G(\A)$ proceeds from a configuration consisting of a pair $(p,q)$ of states from $\boxAco$ and $\A$ respectively, starting from their initial states, as follows:
\begin{itemize}
\item Adam chooses a letter $\letter\in \Sigma$;
\item Eve chooses a transition $\trans{p}{\letter}{p'}$ in $\boxAco$;
\item Eve and Adam play on the one-step arena over $\letter$ from $q$ to a new state $q'$.
\end{itemize}

A play in $G(\A)$ consists of a run $\rho$ in $\boxAco$ and a path $\rho'$ in $\A$. It is winning for Eve if either $\rho$ is accepting in $\boxAco$ (in which case $w\notin L(\A)$), or $\rho'$ is accepting in $\A$.\\

If $\A$ is $\EGFG$ and $\boxAco$ is GFG, Eve has a winning strategy in $G(\A)$ consisting of building a run in $\boxAco$ using her GFG strategy in $\boxAco$ and a path in $\A$ using her $\EGFG$ strategy in $\A$. This guarantees that if $w\in L(\A)$ then the path in $\A$ is accepting, and otherwise the run in $\boxAco$ is accepting.

We then argue that as the winning condition of $G(\A)$ is a Rabin condition, Eve also has a winning strategy that is positional in $\A$, that is, which only depends on the history of the word and the current position. 
(Interestingly, the question of whether Eve can resolve the nondeterminism in a class of alternating GFG automata with only the knowledge of the word read so far does not tightly correspond to whether the acceptance condition of this class is memoryless. For example, it does hold for the generalised-B\"uchi condition, though it is not memoryless.) 
See~\cref{ap:Determinisation} for details.

\NotNeeded{
A more formal definition is given in the appendix.

\todo{Appendix from here} 
First consider the synchronised product $R_{A,\Sigma}\times \boxAco$, which is a~game with labels of the form $\Sigma\times \Gamma_{\boxAco}$, where $\Gamma_{\boxAco}$ is the parity condition of $\boxAco$. Now, we can treat the automaton $\A$ as an~automaton over the alphabet $\Sigma\times \Gamma_{\boxAco}$ that just ignores the second component of the given letter. Thus, we can define a~game $G'= \big(R_{A,\Sigma}\times \boxAco\big)\times \A$.

Notice that $G'$ is naturally divided into rounds, between two consecutive positions of the form $(v,p,q)$, where $v$ is the unique position of $R_{A,\Sigma}$, $p$ is a~state of $\boxAco$ and $q$ is a~state of $\A$. Such a~round, starting in $(v,p,q)$ consists of first Adam choosing a~letter $\letter$; then Eve resolving nondeterminism of $\boxAco$ from $p$ over $\letter$; and then both players playing the game corresponding to the transition condition $\delta(q,\letter)$ of $\A$.

Let the winning condition of $G'$ say that either the sequence of transitions of $\boxAco$ is accepting or the sequence of transitions of $\A$ is accepting. Since $\A$ is $\EGFG$ and $\boxAco$ is GFG, we know that Eve has a~winning strategy in $G'$: she just plays her GFG strategies in both automata and is guaranteed to win whether the word produced by Adam is in $L(\A)$ or $L(\boxAco)$.

As the winning condition of $G'$ is a disjunction of two Rabin conditions, Eve has a positional winning strategy. Fix such a strategy $\stratE$.
\todo{ to here?}
}


\begin{remark}
There is some magic here: both the GFG strategies of Eve in $\A$ and in $\boxAco$ may require exponential memory, yet, when she needs to satisfy the disjunction of the two conditions, no more memory is needed. In a sense, the states of $\A$ provide the memory for $\boxAco$ and the states of $\boxAco$ provide the memory for $\A$.
\end{remark}

The game $G'(\A)$ is similar, except that Adam is given control of $\boxA$ and Eve is in charge of letters. This time Adam wins a play, consisting of a run of $\boxA$ and a path in $\A$, if either the path of $\A$ is rejecting  or the run of $\boxA$ is accepting.

Accordingly, if $\A$ is GFG, then he can win by using the $\EGFG$ strategy in $\boxA$ and the $\AGFG$ strategy in $\A$. Then if $w\in L(\A)$, the run in $\boxA$ is accepting, and otherwise the path of $\A$ is rejecting.
As before, he also has a positional winning strategy in $G'(\A)$.\\

\NotNeeded{
\todo{begin into appendix}
Now do the same with $\A$ and $\boxA$ for Adam: define $G$ as $\big(R_{E,\Sigma}\times \overline{\boxA}\big)\times \A$, where $\overline{\boxA}$ is the automaton $\boxA$ where the transitions are turned from nondeterministic to universal, i.e,\ we replace $\lor$ with $\land$.

Again, in a~round of $G$ from a~position $(v,p,q)$: Eve plays a letter $\letter$; Adam resolves nondeterminism of $\boxA$; then they both resolve the choices in $\A$. Let Adam win $G$ if either the play of $\A$ is rejecting or the run of $\boxA$ is accepting. Again we can ensure that Adam has a~winning strategy in $G$, because both automata are GFG: he uses the GFG strategy of $\boxA$ and the $\AGFG$ strategy over $\A$. If the word given by Eve belongs to $L(\A)$ then Adam wins by producing an accepting run of $\boxA$, otherwise he wins by refuting an accepting run of $\A$. Let $\stratA$ be his positional winning strategy in that game.

\todo{end}
}

We are now ready to build the deterministic automaton from a GFG APW $\A$, using positional winning strategies $\stratE$ and $\stratA$ for Eve and Adam in $G(\A)$ and $G'(\A)$, respectively.
 
Let $\D$ be the automaton with states of the form $(q,p_1,p_2)$, with $q$ a~state of $\A$, $p_1$ a~state of $\boxA$ and $p_2$ a state of $\boxAco$. 
A transition of $D$ over $\letter$ moves to $(q',p_1',p_2')$ such that moving from $(q,p_1)$ to $(q',p_1)$ is consistent with $\stratA$; and moving from $(q,p_2)$ to $(q',p_2')$ is consistent with $\stratE$.
The acceptance condition of $\D$ is inherited from $\A$.

\NotNeeded{
\todo{start appendix} 
When reading a~letter $\letter$ in such a~state, the following computations are performed:
\begin{enumerate}
\item We simulate the choices made by $\stratE$ in $G'(\A)$ upon obtaining $\letter$ from Adam. This way we know how to resolve nondeterminism of $\boxAco$ and what to do with disjunctions inside $\A$.
\item We simulate the choices made by $\stratA$ in $G(\A)$ upon obtaining $\letter$ from Eve. This way we know how to resolve nondeterminism in $\boxA$ and what to do with conjunctions of $\A$.
\item In the end we proceed to a new state of $\A$ and resolved nondeterminism of both $\boxA$ and $\boxAco$.
\end{enumerate}

Let the acceptance condition of $\D$ be inherited from $\A$.
\todo{end appendix}
}

\begin{restatable}{lemma}{lemdeteq}\label{lem:det-eq}
For a GFG APW $\A$ and $\D$ built as above, $L(\A)=L(\D)$.
\end{restatable}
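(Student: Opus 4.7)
The plan is to show both inclusions by exhibiting, for each infinite word $w\in\Sigma^\omega$, a correspondence between the unique run of $\D$ on $w$ and specific plays in the auxiliary games $G(\A)$ and $G'(\A)$. The substantive facts---that $\stratE$ wins $G(\A)$ and $\stratA$ wins $G'(\A)$---have already been secured, so most of the remaining work should be bookkeeping: making sure that the $\D$-state simulates both positional strategies faithfully.

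Fix $w=w_0w_1\ldots\in\Sigma^\omega$ and let $(q_i,p^1_i,p^2_i)_{i\in\Nat}$ be the sequence of states visited by $\D$ on $w$; this is well-defined because $\stratE$ and $\stratA$ are positional. By construction this single run induces synchronously a path $\rho=q_0,q_1,\ldots$ in $\A$, a run $\rho_1=p^1_0,p^1_1,\ldots$ in $\boxA$, and a run $\rho_2=p^2_0,p^2_1,\ldots$ in $\boxAco$; moreover $w\in L(\D)$ iff $\rho$ is accepting in $\A$. First I would verify, by induction on the round number, that: \textbf{(i)} when Adam plays letters according to $w$ in $G(\A)$ and resolves the conjunctions inside $\A$ using $\stratA$ (evaluated on the $(q,p_1)$-projection of the current $\D$-state), the resulting play against $\stratE$ is exactly the pair $(\rho_2,\rho)$; and symmetrically \textbf{(ii)} when Eve plays $w$ in $G'(\A)$ and resolves disjunctions in $\A$ using $\stratE$, the resulting play against $\stratA$ is exactly $(\rho_1,\rho)$. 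This is forced because positional strategies depend only on the current position, so the choices encoded in $\D$'s transitions agree with the corresponding moves in the two games at every intermediate node of the one-step arenas.

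Given this correspondence, the two inclusions follow by chasing winning conditions. For $w\in L(\A)\Rightarrow w\in L(\D)$: since $L(\boxAco)=L(\A)^\mathrm{c}$, no run of $\boxAco$ on $w$ is accepting, so $\rho_2$ rejects; as $\stratE$ wins the play from (i), whose condition is ``$\rho_2$ accepting in $\boxAco$ or $\rho$ accepting in $\A$'', $\rho$ must be accepting and hence $w\in L(\D)$. Conversely, for $w\notin L(\A)\Rightarrow w\notin L(\D)$: since $L(\boxA)=L(\A)$, no run of $\boxA$ on $w$ accepts, so $\rho_1$ rejects; as $\stratA$ wins the play from (ii), whose condition is ``$\rho_1$ accepting in $\boxA$ or $\rho$ rejecting in $\A$'', we conclude that $\rho$ rejects, so $w\notin L(\D)$. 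The step I expect to require the most care is (i)--(ii): one must check that $\stratA$, originally defined on positions of $G'(\A)$, really only consults the $(q,p_1)$-component of the $\D$-state when acting on $\A$'s conjunctions (and symmetrically for $\stratE$), so that transporting each strategy into the other game yields a bona fide play. Once this is nailed down, the rest is immediate from positional determinacy and the winning conditions of the two games.
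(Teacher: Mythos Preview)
Your proposal is correct and follows essentially the same approach as the paper's proof: use that $\stratE$ wins $G(\A)$ and $\stratA$ wins $G'(\A)$, observe that the $\boxAco$- (resp.\ $\boxA$-) component cannot accept when $w\in L(\A)$ (resp.\ $w\notin L(\A)$), and conclude that the $\A$-path $\rho$ must accept (resp.\ reject). The paper's proof is terser and does not spell out your correspondence (i)--(ii), but the argument is the same; your explicit check that the $\D$-run is simultaneously a $\stratE$-consistent play of $G(\A)$ (with Adam's conjunctions supplied by $\stratA$ via the $p_1$-component as memory) and a $\stratA$-consistent play of $G'(\A)$ is exactly the bookkeeping the paper leaves implicit.
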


\NotNeeded{
\todo{start appendix}

\begin{proof}
Take a word $w\in\Sigma^\omega$. First assume that $w\in L(\A)$. Eve cannot win a play of the game $G'$ with the letters played in $R_{A,\Sigma}$ coming from $w$ using by the first disjunct of her winning condition, since $L(\boxAco)=L(\bar\A)$. Thus, all the plays over $w$ consistent with her winning strategy $\stratE$ in $\G'$ must guarantee that the constructed path of $\A$ is accepting. Thus, the run of the automaton $\D$ over $w$ is accepting.

Now assume that $w\notin L(\A)$. Dually, no play of the game $G$ with the letters coming from $w$ can produce an accepting run of $\boxA$ over $w$. Thus, the strategy $\stratA$ guarantees that the sequence of visited states of $\A$ is rejecting. Thus, the run of $\D$ over $w$ must be rejecting.
\end{proof}

\todo{end appendix}
}

\begin{remark}
To extend this construction to an alternating GFG Rabin automaton~$\A$, we would need to remove alternations from both $\A$ and its dual while preserving GFGness. However, the dual is a Streett automaton, for which we cannot invoke positional determinacy.
\end{remark}

\section{Deciding GFGness of Alternating Automata}
\label{sec:deciding}

We use the development of the last section to show that deciding whether an APW is GFG is in \exptime. This matches the best known upper bound for the same problem on NPW. 


The main result of this section is the following theorem; its proof is in~\cref{app:exptime-gfg-alt}.

\begin{theorem}
\label{thm:exp-time-gfg-alt}
There exists an~\exptime algorithm that takes as input an alternating parity automaton $\A$ and decides whether $\A$ is GFG.
\end{theorem}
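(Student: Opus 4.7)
The plan is to reduce GFG decision to two instances of $\EGFG$ decision, via the equivalence: $\A$ is GFG iff $\A$ is $\EGFG$ and $\bar\A$ is $\EGFG$. Since $\bar\A$ is an alternating parity automaton of the same size (with priorities shifted by one), the task reduces to deciding $\EGFG$ of an arbitrary alternating parity automaton in \exptime.

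The cornerstone would be establishing the equivalence: $\A$ is $\EGFG$ if and only if $\boxA$ is GFG, as a nondeterministic parity automaton. The forward direction is precisely \cref{lem:GFG-preservation}. For the converse, I would start from a GFG oracle strategy $\sigma$ in the letter game of $\boxA$. By the construction of $\boxA$, each of its transitions on a letter $\letter$ is labelled by a box $\beta \in \boxes_{\A,\letter}$, which encodes a positional strategy of Eve in the one-step arena of $\A$ over $\letter$. I would then translate $\sigma$ into a strategy of Eve in $\A$'s letter game by playing, on each letter, the positional strategy encoded by the box that $\sigma$ selects. For any $w \in L(\A) = L(\boxA)$ (\cref{lem:PositionalStrategiesAndRuns}), the $\sigma$-run on $w$ is accepting in $\boxA$, which by the construction of $\B$ means the induced box sequence is universally accepting in $\A$; hence every path Adam may produce in the one-step arenas is accepting, witnessing $\EGFG$.

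Given this equivalence, the algorithm runs as follows. First, construct $\boxA$ in single-exponential time using \cref{thm:exp-gfg-dealt}: this yields a nondeterministic parity automaton with $2^{O(nk \log nk)}$ states and parity index polynomial in $|\A|$ (see \cref{lem:prop-of-d}). Next, invoke the known decision procedure for GFGness of nondeterministic parity automata, which runs in time $N^{O(r)}$ on an $N$-state index-$r$ automaton; applied to $\boxA$ this evaluates to $2^{\mathrm{poly}(|\A|)}$, i.e., \exptime in $|\A|$. Finally, repeat the procedure on $\bar\A$ to decide $\AGFG$ via GFGness of $\boxAco$, and output ``GFG'' iff both checks succeed.

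The principal obstacle would be the backward direction of the $\EGFG$/GFG equivalence, which is not already available from \cref{sec:Determinisation} and demands carefully unpacking the definition of $\boxA$: identifying its transitions with boxes of $\A$, reinterpreting acceptance in $\boxA$ as universal acceptance of the induced sequence of boxes in $\A$, and verifying that the translated strategy of Eve in $\A$'s letter game is winning on every word in $L(\A)$. Once that step is in place, the remaining ingredients---\cref{thm:exp-gfg-dealt}, \cref{lem:PositionalStrategiesAndRuns}, \cref{lem:GFG-preservation}, and the off-the-shelf \exptime algorithm for nondeterministic parity GFG decision---slot together to yield the desired \exptime bound.
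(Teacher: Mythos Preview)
Your reduction is correct and in fact coincides with the paper's first step: the paper also establishes that $\A$ is GFG iff both $\boxA$ and $\boxAco$ are GFG, and its \cref{lem:GFG-B-to-A} is exactly the backward implication you sketch. So the ``principal obstacle'' you identify is real but easily overcome, essentially as you describe.

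The genuine gap is in your complexity argument. You appeal to an ``off-the-shelf'' decision procedure for GFGness of nondeterministic parity automata running in time $N^{O(r)}$. No such algorithm is known: for fixed index $r$ this would give a polynomial-time procedure, which is precisely the open conjecture of Bagnol and Kuperberg mentioned in the paper's conclusions. The known upper bound for NPW GFGness is \exptime{} in $N$ (it goes through determinising the NPW, or an equivalent construction), and applying it to $\boxA$, which already has $2^{O(nk\log nk)}$ states, yields a doubly-exponential procedure in $|\A|$, not \exptime.

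The paper sidesteps this by \emph{not} testing GFGness of $\boxA$ and $\boxAco$ separately. Instead it builds a single game $G''$ in which Adam plays letters and Eve simultaneously builds a run of $\boxA$ and a run of $\boxAco$, winning if at least one of the two runs is accepting. Since every word lies in exactly one of $L(\boxA)$, $L(\boxAco)$, Eve wins $G''$ iff both automata are GFG. The point is that $G''$ is only exponential in $|\A|$ and its winning condition is a disjunction of two parity conditions of index polynomial in $|\A|$; taking the product with a deterministic parity automaton for this disjunction (\cref{lem:aut-for-disjunction}) gives a parity game of exponential size with polynomially many priorities, solvable in \exptime{} via e.g.\ \cite{CJKLS17}. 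In short, the paper exploits that it has \emph{both} $\boxA$ and $\boxAco$ in hand, letting each serve as the complement witness for the other, whereas your black-box call to an NPW-GFG decider would have to redo this work by determinising $\boxA$.
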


 The idea is to construct the (exponential size) NPWs $\boxA$ and $\boxAco$ for $L(\A)$ and $L(\A)^\mathrm{c}$ respectively, which are GFG if and only if $\A$ is $\EGFG$ and $\AGFG$ respectively. Then, it remains to check whether both are indeed GFG. Since we don't have a polynomial procedure to check this, instead, we will build a game which Eve wins if and only if \textit{both} are indeed GFG, and which we can solve in exponential time with respect to the size of $\A$.
 
 First, we observe the following reciprocal of \cref{lem:GFG-preservation}.

\begin{lemma}
\label{lem:GFG-B-to-A}
If $\boxA$ is GFG then $\A$ is $\EGFG$.
\end{lemma}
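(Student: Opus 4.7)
The plan is to convert a GFG strategy of Eve in $\boxA$ into an $\EGFG$ strategy for $\A$, exploiting the fact that resolving the nondeterminism of $\boxA$ on a letter $\letter$ is exactly equivalent to choosing a box in $\boxes_{\A,\letter}$, and a box is by definition a local positional strategy for Eve resolving disjunctions simultaneously in all transition conditions $\delta(q,\letter)$ of $\A$.

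First I would fix a winning strategy $\sigma$ for Eve in the GFG game of $\boxA$. By construction of $\boxA$ (via the deterministic automaton $\B$ of \cref{lem:prop-of-d}), at each step $\sigma$ takes the history of letters read so far together with the next letter $w_i$, and outputs a box $\beta_i\in\boxes_{\A,w_i}$; this determines the next state of $\boxA$ deterministically via $\delta_\B$. From $\sigma$ I would then define a strategy $\sigma'$ for Eve in the letter game for $\A$ as follows: after the letters $w_0,\ldots,w_{i-1}$ have been played and the play of the letter game is currently in state $q_i$, if Adam plays $w_i$, Eve consults $\sigma$ to obtain the box $\beta_i\in\boxes_{\A,w_i}$ and resolves the disjunctions of $\delta(q_i,w_i)$ according to the local strategy that induces $\beta_i$. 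This is well defined because a box resolves disjunctions for every state $q\in Q$ at once.

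Next I would verify that $\sigma'$ is winning. Fix a play of the letter game consistent with $\sigma'$ producing a word $w=w_0 w_1\ldots$ and a path $\rho=(q_0,w_0,q_1)(q_1,w_1,q_2)\ldots$ in $\A$. By the definition of boxes (recalled in \cref{def:path-in-boxes}), since Eve plays according to the local strategy inducing $\beta_i$ at step $i$ and Adam's response leads to $q_{i+1}$, we have $(q_i,w_i,q_{i+1})\in\beta_i$ for every $i$. Hence $\rho$ is a path of the sequence $\pi=\beta_0\beta_1\ldots$ in the sense of \cref{def:path-in-boxes}. Now suppose $w\in L(\A)$. By \cref{lem:PositionalStrategiesAndRuns} we have $L(\A)=L(\boxA)$, so $w\in L(\boxA)$, and since $\sigma$ is a GFG strategy for $\boxA$, the run of $\boxA$ produced by $\sigma$ on $w$ is accepting. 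By construction of $\B$, this means $\pi$ is universally accepting, so every path of $\pi$ is accepting in $\A$; in particular $\rho$ is accepting. Thus Eve wins the letter game against every strategy of Adam on every word in $L(\A)$, which is precisely the condition for $\A$ to be $\EGFG$.

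The only subtlety — and where I would pay attention in a detailed write\-/up — is in the bookkeeping between the history used by $\sigma$ (a sequence of letters, which is all $\sigma$ sees in the GFG game for $\boxA$) and the history used by $\sigma'$ in the letter game for $\A$: it is crucial that $\sigma$ is a strategy that depends only on the word read so far and not on the positional history of the acceptance game of $\boxA$, which holds because $\boxA$ is nondeterministic and so Eve's GFG strategy there is by definition a function of the input word (and the deterministic state of $\B$ reached so far, which is itself a function of the boxes previously chosen, hence of the word and of $\sigma$).
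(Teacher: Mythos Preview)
Your proposal is correct and follows essentially the same approach as the paper's proof: extract from the GFG strategy of $\boxA$ a function $\Sigma^+\to\boxes_\A$, use each box as a local positional strategy for Eve in the one-step arena of $\A$, and conclude via universal acceptance of the resulting box sequence that every path produced in the letter game is accepting. Your write-up is simply more detailed than the paper's short sketch; the one point you could tighten is that a transition of $\boxA$ need not uniquely determine a box (two distinct boxes may yield the same $\B$-successor), but any choice of a compatible box works since the induced run of $\B$ is unchanged.
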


\begin{proof}
Assume that $\boxA$ is GFG and consider a~strategy witnessing this. Such a~strategy can be easily turned into a~function $\stratE'\colon \Sigma^+\rightarrow \boxes_\A$ that, given a~word $w\in L(\A)$ produces a~universally accepting word of boxes of $\A$. Now, due to the definition of a box, each such box defines a~positional strategy of Eve in the respective one-step game. This allows us to construct a~winning strategy of Eve in the letter game over $\A$.
\end{proof}

Thus, $\A$ is GFG if and only if both $\boxA$ and $\boxAco$ are GFG. To decide this, we consider a~game $G''$ where Adam plays letters and Eve produces runs of the automata $\boxA$ and $\boxAco$ in parallel. The winning condition of $G''$ requires that at least one of the constructed runs must be accepting.

Now, each sequence of letters given by Adam belongs either to the language of $\boxA$ or to $\boxAco$ and therefore, a winning strategy of Eve in $G''$ must comprise of two strategies witnessing GFGness of both $\boxA$ and $\boxAco$. Dually, if both $\boxA$ and $\boxAco$ are GFG then Eve wins $G''$ by playing the two strategies in parallel.

It remains to show that $G''$ is solvable in \exptime. Its winning condition is a disjunction of parity conditions, with index linear in the number of transitions of $\A$. This winning condition is recognised by a deterministic parity automaton of exponential size with polynomial index. To solve $G''$, we take its product with this deterministic automaton that recognises its winning condition, and solve the resulting parity game with an algorithm that is polynomial in the size of the game whenever, like here, the number of priorities is logarithmic in the size of the game, for instance~\cite{CJKLS17}. Details of this construction and its complexity are in~\cref{app:exptime-gfg-alt}.

\section{Conclusions}
\label{sec:conclusions}

The results obtained in this work shed new light on where alternating GFG automata resemble nondeterministic ones, and where they differ. 
Overall, our results show that allowing GFG alternations add succinctness without significantly increasing the complexity of determinisation nor decision procedures.

In particular, we show that alternating parity GFG automata can be exponentially more succinct than any equivalent nondeterministic GFG automata, yet this succinctness does not become double exponential when compared to deterministic automata, answering a question from~\cite{BL19}. Some further succinctness problems are left open here, such as the possibility of a doubly exponential gap between alternating GFG automata of stronger acceptance conditions and deterministic ones, as well as between $\EGFG$ parity automata and deterministic ones.

We also show that the interplay between the two players can be used to decide whether an automaton is GFG without deciding $\EGFG$ and $\AGFG$ separately, yielding an \exptime algorithm. This matches the current algorithms for deciding GFGness on non-deterministic automata. Bagnol and Kuperberg conjectured that GFGness is \ptime decidable for nondeterministic parity automata of fixed index~\cite{BK18}; we extend this conjecture to alternating automata.

It then becomes interesting to ask how to build an alternating automaton GFG. Indeed, Henzinger and Piterman~\cite{HP06} proposed a transformation of nondeterministic automata into GFG automata, which, despite in some cases leading to a deterministic automaton, is, conceptually, a much simpler procedure than determinisation. Indeed, in many examples of non-GFG automata, adding transitions suffices to obtain a GFG one. We leave finding such a procedure for alternating automata as future work.



\bibliography{gfg}

\clearpage

\noindent{\huge{\textbf{Appendix}}}
\appendix

\section{Appendix of \cref{sec:Preliminaries}}
\label{ap:Preliminaries}

In this section of the appendix we provide the remaining technical definitions from \cref{sec:Preliminaries} that are used in the proofs.

\newcommand{\GL}{C} 
\Subject{Games}
A~\emph{$\Sigma$-arena} is a directed (finite or infinite) graph with nodes (positions) split into $E$\=/labelled positions of Eve and $A$-labelled positions of Adam, where the edges (transitions) are labelled by elements of $\Sigma\sqcup\{\noLab\}$. The role of $\noLab$ is to mark edges that have no influence on the winner of a play, e.g., edges allowing players to resolve some Boolean formula.

We represent such an arena as $R=(V,X,V_E,V_A)$, where $V$ is its set of positions; $X\subseteq V\times \big(\Sigma\sqcup\{\noLab\}\big)\times V$ its transitions; $V_E\subseteq V$ the $E$\=/positions; and $V_A = V\setminus V_E$ the $A$\=/positions.

Notice that the definition allows more than one transition between a pair of positions (such transitions needs to have distinct labels). We will require that each infinite path contains infinitely many $\Sigma$\=/labelled transitions. An arena might be rooted at an initial position $v_\iota\in V$. We say that a position $v$ is \emph{terminal} if there is no outgoing transition from~$v$ (i.e.\ no element of $X$ of the form $(v,\letter,v')$). If we don't say that an arena is \emph{partial} then it is assumed that there are no terminal positions.

If $R$ is a (partial) $\Sigma$\=/arena and $V'\subseteq V$ is a set of positions, then $R\restr_{V'}$ is the \emph{sub-arena} of $R$ defined as the restriction of $R$ to the positions in $V'$, namely for $P\in\{E,A\}$, the $P$\=/positions of $R\restr_{V'}$ are $V'_P:= V_P\cap V'$, and its transitions are $X':= X\cap (V'\times (\Sigma\cup\{\epsilon\})\times V')$.
We say that two (partial) $\Sigma$-arenas 
$R=(V,X,V_E,V_A)$ and $R'=(V',X',V'_E,V'_A)$ are \emph{isomorphic} if there exists a bijection $i\colon V\to V'$ that preserves the membership in $V_P$/$V'_P$, for $P\in\{E,A\}$, and sets of transitions $X$/$X'$.

A~\emph{partial play} in $R$ is a path in $R$, i.e.,\ an element $\pi=v_0e_0v_1e_1\ldots$ of $V\cdot\big(X\cdot V)^\ast\cup\big(V\cdot X\big)^\omega$, where for every $i$ we have $e_i=(v_i,\letter_i, v_{i+1})$. Such a partial play is said to \emph{begin} in $v_0$. A~partial play is a \emph{play} if either it is infinite or the last position $v_i$ is terminal. 

A \emph{game} is a $\Sigma$-arena together with a winning condition $W\subseteq \Sigma^\omega$. An infinite play $\pi$ is said to be winning for Eve in the game if the sequence of $\Sigma$-labels $(\letter_i)_{i\in\Nat}$ of the transitions along $\pi$ form a word in $W$. Else $\pi$ is winning for Adam. Games with some class $X$ of winning conditions (e.g.,\ the parity condition) are called $X$ games (e.g.,\ parity games).

A \emph{strategy} for Eve (resp.\ Adam) is a function $\tau\colon V\cdot \big(X\cdot V\big)^*\rightarrow X$ that maps a \emph{history} $v_0e_0v_1\ldots e_{i-1}v_i$, i.e. a finite prefix of a play in $R$, to a transition $e_i$ whenever $v_i$ belongs to $V_E$ (resp. to $V_A$). A partial play $v_0e_0v_1e_1\dots$ agrees with a strategy $\tau$ for Eve (Adam) if whenever $v_i\in V_E$ (resp. in $V_A$), we have $e_i=\tau(v_0e_0v_1\ldots e_{i-1}v_i)$. A strategy for Eve (Adam) is winning from a position $v\in V$ if all plays beginning in $v$ that agree with it are winning for Eve (Adam). We say that a player wins the game from a position $v\in V$ if they have a winning strategy from $v$.
If the game is rooted at $v_\iota$, we say that a player wins the game if they win from $v_\iota$.

A strategy is \emph{positional} if its value depends only on the last position, i.e.,\ $\tau(v_0e_0\cdots e_{i-1}v_i)$ depends only on $v_i$. In that case the strategy of a player $P$ can be represented as a function $\tau\colon V_P\to X$.

We also define the notion of \emph{strategy with memory $M$} for player $P$. This is a tuple $(\sigma, M, m_0, \upd)$ where $M$ is a~set of \emph{memory states}; $m_0$ is an~\emph{initial memory state}; $\upd\colon M\times X\to M$ is an \emph{update function}, and $\sigma\colon M\times V_P\to X$ is a~strategy deciding which move should be played, depending only on the current memory state and on the current position. Along a play, the memory starts with $m_0$, and is updated along every transition according to $\upd$. The general notion of strategy corresponds to $M=(V\cdot X)^*$, and positional strategies correspond to $M$ being a singleton.
A player has a \emph{finite\=/memory winning strategy} if there exists a winning strategy using a finite memory set $M$.

\begin{proposition}
\label{lem:unfolding}
Let $G$ and $G'$ be two $\Sigma$-games with the same winning condition, such that the unfoldings of $G$ and $G'$ are isomorphic. Then Eve has a winning strategy in $G$ if and only if she has a winning strategy in $G'$.
\end{proposition}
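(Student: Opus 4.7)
The plan is to transfer a winning strategy across the isomorphism $\varphi$ between the unfoldings $T_G$ and $T_{G'}$. The crucial observation is that a general strategy, as defined in the paper, is a function on finite histories, and histories from the initial position are exactly the nodes of the unfolding. Hence any isomorphism of unfoldings (which must, by definition, preserve the owner labelling of positions and the $\Sigma \sqcup \{\noLab\}$ labelling of transitions) automatically induces a bijection between Eve's strategies in $G$ and Eve's strategies in $G'$.

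More precisely, I would first set up the combinatorial data: unfolding $T_G$ at $v_\iota$ yields a tree whose nodes are the partial plays of $G$ starting in $v_\iota$, with each node inheriting an owner from its last position and each edge inheriting a label from the transition it represents. The hypothesis gives a tree isomorphism $\varphi\colon T_G \to T_{G'}$ which, on every Eve-node $h$, restricts to a bijection $\varphi_h$ between the outgoing transitions at $h$ in $G$ and the outgoing transitions at $\varphi(h)$ in $G'$. Given a winning strategy $\tau$ for Eve in $G$, define $\tau'$ in $G'$ by $\tau'(h') = \varphi_{\varphi^{-1}(h')}\bigl(\tau(\varphi^{-1}(h'))\bigr)$ for every Eve-history $h'$ in $T_{G'}$.

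A straightforward induction on length shows that $\varphi$ restricts to a bijection between the $\tau$-consistent plays of $G$ from $v_\iota$ and the $\tau'$-consistent plays of $G'$ from $v'_\iota$, and, because $\varphi$ preserves edge labels, each pair of corresponding plays produces the same sequence in $(\Sigma \sqcup \{\noLab\})^\omega$, and thus the same element of $\Sigma^\omega$ after dropping $\noLab$'s. Since both games share the same winning condition $W \subseteq \Sigma^\omega$, a $\tau$-consistent play is winning in $G$ iff the corresponding $\tau'$-consistent play is winning in $G'$. Hence $\tau'$ is winning for Eve in $G'$, and the converse direction is symmetric via $\varphi^{-1}$.

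The statement is essentially a book-keeping exercise once the notion of "isomorphic unfoldings" is pinned down, so I do not expect any deep obstacle. The one point that merits care is making explicit that the isomorphism preserves both the partition of positions into Eve's and Adam's, and the labels on edges (including the distinction between $\Sigma$-labels and $\noLab$-labels), since without these invariants the statement would fail; but this is built into the notion of an arena being a labelled graph with a distinguished partition of vertices, so graph isomorphism in that signature already supplies what is needed.
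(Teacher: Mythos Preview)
Your argument is correct and is the standard way to establish this folklore fact. Note that the paper itself states this proposition without proof, treating it as a basic background result; your write-up supplies exactly the routine unfolding-and-transfer argument one would expect, with appropriate attention to the preservation of owner labels and edge labels that makes the bijection of strategies well-defined and winning-preserving.
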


\begin{proposition}[\cite{klarlund_progress_positional}]
\label{prop:RabinPositionalDeterminacy}
Rabin games are positionally determined for Eve. (If Eve has a~winning strategy then she has a positional winning strategy.)
\end{proposition}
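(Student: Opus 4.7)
The plan is to prove positional determinacy by induction on the number $k$ of Rabin pairs in the winning condition, following the classical Emerson--Jutla attractor approach (an alternative would be Klarlund's progress-measure argument).

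First I would handle the base case $k=1$: a single-pair Rabin condition $\pair{B,G}$ coincides with a three-priority parity condition---assign priority $1$ to transitions in $B$, priority $2$ to transitions in $G\setminus B$, and priority $0$ to the rest---and positional determinacy for parity games of bounded index is a well-known standard result, provable by a straightforward attractor argument.

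Next, for the inductive step with $k>1$ pairs $\{\pair{B_i,G_i}\}_{i=1}^k$, I would invoke Borel determinacy to obtain Eve's winning region $W$, and then decompose $W$ using attractor computations. The idea is: for each $i$, let $X_i\subseteq W$ be the set of positions from which Eve wins using only pair~$i$ (positionally determined by the base case, with strategy $\sigma_i$); then argue that every $v\in W$ either lies in Eve's attractor to some $X_i$, or lies in a sub-arena on which, after erasing the pair witnessed inside the attractor, only $k-1$ pairs survive and the inductive hypothesis applies. Iterating this stratifies $W$ by a well-founded rank, with each layer carrying a positional sub-strategy coming either from $\sigma_i$, from a positional attractor strategy, or from the inductive hypothesis.

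The main obstacle is the gluing step: combining the positional sub-strategies into one positional strategy on all of $W$. My plan is to fix a total order on the stratification and define the global strategy at each $v$ to follow the rule of the least stratum containing $v$. Proving this works reduces to showing that any play consistent with the combined strategy eventually settles into some minimal stratum and then satisfies the associated pair. The crucial fact that makes this succeed for Rabin---but provably fail for Streett---is the \emph{existential} nature of the Rabin condition (a single witnessing pair suffices), which prevents oscillation between incompatible goals and allows strata to be processed independently once the rank stabilises.
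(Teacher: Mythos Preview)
The paper does not prove this proposition: it is stated with a citation to Klarlund and used as a black box. So there is nothing to compare your argument against in the paper itself; your proposal is an attempt to reprove a classical result that the authors simply import.

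Your overall plan (induction on the number of pairs, attractor decomposition, gluing positional sub-strategies) is the standard Emerson--Jutla/Zielonka route and is a perfectly reasonable choice. Two concrete issues, though:

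\textbf{Base case.} Under the paper's max-even parity semantics, your priority assignment is backwards. With $B\mapsto 1$, $G\setminus B\mapsto 2$, rest${}\mapsto 0$, a play that sees both $B$ and $G\setminus B$ infinitely often has limsup priority $2$ and is accepted, whereas the Rabin pair rejects it (since $\inf(\rho)\cap B\neq\emptyset$). Also, a play seeing only ``rest'' has limsup $0$ and is accepted, but Rabin rejects it. The correct encoding is $B\mapsto 3$, $G\setminus B\mapsto 2$, rest${}\mapsto 1$ (or any shift thereof): $B$ must dominate $G$ with an odd priority, and ``rest'' must be odd.

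\textbf{Inductive step.} The decomposition you sketch does not go through as written. Defining $X_i$ as the positions from which Eve wins using \emph{only} pair $i$, it is not true in general that $W$ is covered by the Eve-attractors to the $X_i$ together with a sub-arena on which one pair can be dropped. A position $v\in W$ may require Eve to keep all $k$ pairs alive, choosing which one to satisfy only in the limit depending on Adam's play, so $v$ need not lie in any $X_i$ nor in their attractor, and there is no single pair you can ``erase'' on the complement. The standard argument instead fixes one pair $(B_i,G_i)$, removes Adam's attractor to $B_i$ inside $W$, and analyses the resulting trap (where $B_i$ is never seen, so pair $i$ degenerates to a B\"uchi constraint); one then shows this trap is non-empty and recurses, iterating transfinitely. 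Your ``main obstacle'' paragraph about gluing and rank stabilisation is on target, but the stratification you feed into it needs to be the correct one.
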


\begin{definition}[Synchronised product]
\label{def:GameAutomataProd}
The \emph{synchronised product} $\pro{R}{\A}$ of a (partial) $\Sigma$-arena $R=(V,X,V_E,V_A)$ and an alternating automaton $\A=(\Sigma,Q,\iota,\delta,\alpha)$ with a set of transitions $T$ and labelling $\alpha: T \to \Gamma$ is a (partial) $\Sigma{\times}\Gamma$-arena defined as follows. Its set of positions is $(V\times Q) \cup (V\times Q\times\Sigma\times\widehat{\delta})$, and its transitions are defined by:

\begin{tabular}{ll}
$\big\langle (v,q), \noLab, (v',q,\letter,\delta(q,\letter))\big\rangle$
&for $(v,q)\in V\times Q$ and $\langle v, \letter, v'\rangle\in X$;\\
$\big\langle (v,q), \noLab, (v',q)\big\rangle$ 
& for $(v,q)\in V\times Q$ and $\langle v, \noLab, v'\rangle\in X$;\\
$\big\langle (v,q,\letter,b), \noLab, (v,q,\letter,b_i)\big\rangle$
&for $(v,q,\letter,b)\in V\times Q\times \Sigma\times \widehat{\delta}$\\
&with $b=b_1{\lor} b_2$ or $b=b_1{\land} b_2$ and $i=1,2$;\\
$\big\langle (v,q,\letter,q'), \big(\letter,\alpha(q,\letter,q')\big), (v,q')\big\rangle$
&for $(v,q,\letter,q')\in V\times Q\times \Sigma\times \widehat{\delta}$ with $q'\in Q$.
\end{tabular}

The positions belonging to Eve are of the form $(v,q)$ where $v\in V_E$ and of the form $(v,q,\letter,b_1{\lor}b_2)$. The remaining ones belong to Adam. If $R$ has an initial position $v_\iota$ then the initial position of the product is $(v_\iota,\iota)$.
\end{definition}

We implicitly assume that the arena only contains vertices that are reachable from $V\times Q$. (They need not be reachable from an initial position of $R$ and an initial state of $\A$, but from some position of $R$ and state of $\A$.)

We will sometimes consider longer products, like $(R\times \A_1)\times \A_2$, where $R$ is a~$\Sigma$\=/arena and both automata $\A_1$ and $\A_2$ are over the alphabet $\Sigma$. Assume that $\A_1$ and $\A_2$ have transitions labelled in sets $\Gamma_1$ and $\Gamma_2$ respectively. Notice that in that case the arena $R\times \A_1$ is formally a~$\Sigma\times\Gamma_1$\=/arena. Thus, to make the above formula precise, we treat the automaton $\A_2$ as an~automaton over the alphabet $\Sigma\times\Gamma_1$ and assume that it ignores the second component of the letters read.

\Subject{One-step arenas}
For a~letter $\letter\in\Sigma$, we denote by $R_\letter$ a~partial $\Sigma$\=/arena consisting of two vertices $v$ and $v'$ (it does not matter which of the players controls them), and one transition $\tuple{v, \letter, v'}$. Then, for an automaton $\A=(\Sigma,Q,\iota,\delta,\alpha)$, the product $R_\letter\times \A$ is a partial arena, in which the players should resolve their choices in the formulas $\delta(q,\letter)$ for all the possible states $q\in Q$. We call it the \emph{one\=/step arena} of $\A$ over $\letter$.
Such an arena contains one position of the form $(v,q)$ for each state $q\in Q$; a~set of non\=/terminal positions of the form $(v',q,\letter,\psi)$ for some $q\in Q$ and $\psi\in\widehat{\delta}$; and one terminal position of the form $(v',q)$ for each state $q\in Q$. (See \cref{fig:one-step-arena}.)

\Subject{Boxes}
\label{def:boxes}
In the later exposition, we will be interested in the combinatorial structure of possible strategies of Eve over one-step arenas $R_\letter\times \A$. For a positional strategy $\strat$ of Eve in a game of the form $R_\letter\times \A$, we define the \emph{box} of $\A$, $\letter$, and $\strat$, denoted by $\beta(\A,\letter,\strat)$, as the relation that is a subset of $Q\times \Sigma\times Q$ and contains a triple $(q,\letter,q')$ iff there exists a play in $R_\letter\times \A$ that is consistent with $\strat$, starting in $(v,q)$ and ending in $(v',q')$. We further define for every $\letter\in\Sigma$, the set $\boxes_{\tuple{\A,\letter}} = \{\beta(\A,\letter,\strat) \mid\text{$\strat$ is a positional strategy of Eve}\}$. Finally, let $\boxes_\A:=\bigcup_{\letter\in \Sigma}\boxes_{\tuple{\A,\letter}}$. Notice that $|\boxes_\A| \leq 2^{|Q\times \Sigma\times Q|}$.
When speaking of an arbitrary box, we mean any non-empty relation $\beta\subseteq Q\times\Sigma\times Q$ where all the letters $\letter$ appearing on the middle component are equal.

\cref{fig:boxes} represents $\boxes_{\tuple{\A,a}}$ for the automaton $\A$ of \cref{fig:one-step-arena}: Since there are two binary\=/choice positions of Eve in the corresponding one\=/step arena, there are four distinct positional strategies of Eve, which give the four possible boxes. They correspond to Eve choosing respectively LL, LR, RL, RR, where L stands for a left choice and R for a right choice in each of her two binary\=/choice positions.

\begin{proposition}
\label{prop:choice-to-strat}
Consider a letter $\letter$ and an automaton $\A$ with states $Q$ and transition function $\delta$. Then there is a bijection between $\boxes_{\tuple{\A,\letter}}$ and the positional strategies of Eve in the one-step arena of $\A$ and $\letter$.
\end{proposition}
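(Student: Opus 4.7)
The plan is to establish the claimed bijection through the map $\Phi \colon \strat \mapsto \beta(\A, \letter, \strat)$ sending a positional strategy of Eve in $R_\letter \times \A$ to its box. Surjectivity is by construction of $\boxes_{\tuple{\A, \letter}}$ as the image of $\Phi$, so only injectivity requires genuine argument.

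I would begin by analysing the structure of the one\=/step arena. For each positional strategy $\strat$, each state $q \in Q$, and each subformula $\psi$ of $\delta(q, \letter)$, define $R_\strat(q, \psi) \subseteq Q$ to be the set of atoms reachable from $(v', q, \letter, \psi)$ by some play consistent with $\strat$. A routine structural induction on $\psi$ yields the clean recursion
\[
R_\strat(q, q') = \{q'\}, \quad R_\strat(q, b_1 \land b_2) = R_\strat(q, b_1) \cup R_\strat(q, b_2), \quad R_\strat(q, b_1 \lor b_2) = R_\strat(q, b_i),
\]
where at a $\lor$\=/node $i$ is the disjunct selected by $\strat$, and the union at the $\land$\=/node reflects Adam's freedom. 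Unfolded at the root this gives $(q, \letter, q') \in \beta(\A, \letter, \strat) \iff q' \in R_\strat(q, \delta(q, \letter))$, so the box records exactly the top\=/level reachability data under the strategy.

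For injectivity I would then recover $\strat$ from $\beta$ by a bottom\=/up procedure: starting from the leaves and working upward, determine reachability annotations $R(q, \psi)$ at each subformula compatible with the recursion above and with the boundary condition $R(q, \delta(q, \letter)) = \{q' : (q, \letter, q') \in \beta\}$; at each reachable $\lor$\=/node these annotations pin down the disjunct picked by $\strat$, and everywhere else the construction is forced. The main obstacle I foresee is the degenerate case in which two disjuncts of some reachable $\lor$\=/node admit the same achievable reachability set---most conspicuously when the disjuncts are syntactically identical (so the arena contains parallel edges to a common child), but also in subtler situations such as $\delta(q, \letter) = q_1 \lor (q_1 \lor q_2)$. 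In such cases two positional strategies in the strict function\=/theoretic sense can differ while producing the same box, so the stated bijection must be read modulo the natural equivalence identifying positional strategies that induce identical runs in $R_\letter \times \A$. Apart from this accounting, the argument is a direct chase of definitions.
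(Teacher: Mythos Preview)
The paper does not supply a proof of this proposition; it is stated as evident and immediately used. So there is no paper proof to compare against, and your write-up is a genuine attempt to justify a claim the authors left implicit.

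Your analysis is careful and you are right to flag the degenerate case: the map $\Phi\colon \strat\mapsto\beta(\A,\letter,\strat)$ is surjective by definition of $\boxes_{\tuple{\A,\letter}}$, but it need not be injective. Your example $\delta(q,\letter)=q_1\lor(q_1\lor q_2)$ already exhibits this. However, your proposed repair---identifying positional strategies that ``induce identical runs in $R_\letter\times\A$''---does not close the gap. Take the two strategies that both reach the atom $q_1$ from $q$: one picks the left disjunct $q_1$ at the outer $\lor$, the other picks the right disjunct and then $q_1$ at the inner $\lor$. These strategies are distinguishable on positions that are reachable under each of them, and the unique play consistent with each is a different path through the formula arena; yet both yield the box $\{(q,\letter,q_1)\}$. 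So even modulo your equivalence the map is not injective, and the proposition is, strictly speaking, false in full generality.

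What saves the paper is that the bijection is never actually needed. The sole use of the proposition (in the proof of \cref{lem:PositionalStrategiesAndRuns}) requires only that every positional strategy determines a box, every box arises from some positional strategy, and the paths through a box are exactly the endpoints of plays consistent with any strategy realising that box. Your recursion for $R_\strat(q,\psi)$ establishes precisely this correspondence, so your argument proves everything the paper relies on; the word ``bijection'' in the statement is a harmless overstatement rather than a load-bearing claim.
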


\Subject{Acceptance of a word by an automaton}
We define the acceptance directly in terms of the model-checking (acceptance/membership) game, which happens to be exactly the product of the automaton with a path-like arena describing the input word. More precisely, given a word $w\in\Sigma^\omega$, the \emph{model-checking game} is defined as the product $\pro{R_w}{\A}$, where the arena~$R_w$ consists of an infinite path $\omega$, of which all positions belong to Eve (although it does not matter); the transitions are of the form $\langle i, w_i, i{+}1\rangle$; the initial position is $0$; and the winning condition is based on the winning condition of $\A$ (the $\Sigma$-component of the labels is ignored). We say that $\A$ \emph{accepts} $w$ if Eve has a winning strategy in the model-checking game $R_w\times \A$. The language of an automaton $\A$, denoted by $L(\A)$, is the set of words that it accepts (recognises).

Notice that for each $i\in\Nat$, the sub-arena of $\pro{R_w}{\A}$ with positions in
\[\big( \{i\}\times Q\big) \cup \big(\{i{+}1\} \times  Q{\times}\Sigma{\times}\BP(Q)\big)\cup \big(\{i{+}1\}\times Q\big)\]
is isomorphic to the one-step arena of $\A$ over $w_i$.

\NotNeeded{
\begin{figure}
\centering
\begin{tikzpicture}
\draw (-1, 1) edge[obrace] node{$R_w$} (1, 1);

\draw (3, 1) edge[obrace] node{$R_w\times \A$} (9, 1);

\foreach \x in {0, 1, 2, 3} {
	\node[eve, scale=0.9] (l\x) at (0, -2* \x) {$\x$};
	
	\foreach \s in {0,...,2} {
		\node[eve, scale=0.7] (v\x\s) at (4 + 2*\s, -2 * \x) {$(\x, q_{\s})$};
	}
}

\foreach \x/\l in {0/a, 1/b, 2/b} {
	\evalInt{\y}{\x+1}
	\draw[trans] (l\x) edge node[scale=0.8] {$\l$} (l\y);
	\node[inner sep=1pt] at (6, -2*\x - 1) {{\it one-step arena over \l}};
}

\node[dots] at (0, -7) {$\vdots$};
\node[dots] at (4, -7) {$\vdots$};
\node[dots] at (8, -7) {$\vdots$};

\end{tikzpicture}
\caption{The model-checking game defined as the product of the arena $R_w$ representing a given word $w$ and the automaton $\A$ of \cref{fig:one-step-arena}. The one-step arena of \cref{fig:one-step-arena} depicts the steps that correspond to the letter $a$.}
\label{fig:model-cheking-game}
\end{figure}
}

\section{Appendix of \cref{sec:Alternating-behaviour}}
\label{ap:Alternating}

\begin{definition}[A formalisation of \cref{def:LetterGames}]
Let $R_{A,\Sigma}$ be the $\Sigma$-arena consisting of a single position $v$ that belongs to Adam and the set of transitions $X$ of the form $\tuple{v,\letter,v}$ for each letter $\letter\in \Sigma$ (see \cref{fig:letter-giving-game}). The arena $R_{E,\Sigma}$ is the same except that $v$ belongs to Eve. Notice that the products $R_{A,\Sigma}\times \A$ and $R_{E,\Sigma}\times \A$ are both labelled by $\Sigma\times \Gamma$, where $\Sigma$ is the alphabet of $\A$ and $\Gamma$ is $\A$'s labelling, on top of which its acceptance condition is defined. Thus, the winning condition of games defined on these arenas can depend on a sequence of labels of the form $(\letter_i,\gamma_i)_{i\in\Nat}$. Then, \emph{Eve's letter game} is played over  $R_{A,\Sigma}\times \A$, where Eve wins if:
\[\text{$(\letter_i)_{i\in\Nat}\notin L(\A)$ or the sequence $(\gamma_i)_{i\in\Nat}$ satisfies the acceptance condition of $\A$.}\]
Dually, \emph{Adam's letter game} is played over $R_{E,\Sigma}\times \A$, where Adam wins if:
\[\text{$(\letter_i)_{i\in\Nat}\in L(\A)$ or the sequence $(\gamma_i)_{i\in\Nat}$ violates the acceptance condition of $\A$.}\]
\end{definition}

\section{Appendix of \cref{sec:Determinisation}}
\label{ap:Determinisation}

This section provides the technical details of the determinisation procedure in~\cref{sec:Determinisation}.
We start with some technical analysis of the types of histories needed to win letter games.

\subsection{Good for Games Automata: Required Histories}
\label{ap:sec:GfgDefinitions}

We begin by considering an~\emph{expanded} letter game. This will allow us to use a form of positional determinacy in letter games.

\begin{figure}[h]
\centering
\begin{tikzpicture}
\draw (-1.5, 1) edge[obrace] node{$R_{A,\Sigma}$} (1.5, 1);
\draw (3, 1) edge[obrace] node{$R^\ast_{A,\Sigma}$} (12, 1);

\node[adam, minimum size = 25pt] (v) at (0, -1) {$v$};

\draw[trans] (v) edge[out= 60,in=120, looseness=8] node[above] {$a$} (v);
\draw[trans] (v) edge[out=-30,in= 30, looseness=8] node[right] {$b$} (v);

\node[adam, minimum size = 20pt] (v0) at (7.5, +0.5) {$\epsilon$};
\node[adam, minimum size = 20pt] (v0a) at (6.5, -0.5) {$a$};
\node[adam, minimum size = 20pt] (v0b) at (8.5, -0.5) {$b$};

\node[adam, minimum size = 20pt, scale=0.8] (v0aa) at (6.0, -1.5) {$aa$};
\node[adam, minimum size = 20pt, scale=0.8] (v0ab) at (7.0, -1.5) {$ab$};

\node[adam, minimum size = 20pt, scale=0.8] (v0ba) at (8.0, -1.5) {$ba$};
\node[adam, minimum size = 20pt, scale=0.8] (v0bb) at (9.0, -1.5) {$bb$};

\draw[trans] (v0) edge node[above left]  {$a$} (v0a);
\draw[trans] (v0) edge node[above right] {$b$} (v0b);

\draw[trans] (v0a) edge node[scale=0.8, yshift=2pt, left]  {$a$} (v0aa);
\draw[trans] (v0a) edge node[scale=0.8, yshift=2pt, right] {$b$} (v0ab);

\draw[trans] (v0b) edge node[scale=0.8, yshift=2pt, left]  {$a$} (v0ba);
\draw[trans] (v0b) edge node[scale=0.8, yshift=2pt, right] {$b$} (v0bb);

\newcommand{\dotez}[1]{
\node at (#1, -2.0) {$\vdots$};
}

\dotez{5.5}
\dotez{6.5}
\dotez{7.5}
\dotez{8.5}
\dotez{9.5}
\end{tikzpicture}
\caption{The arenas $R_{A,\Sigma}$ and $R^\ast_{A,\Sigma}$, allowing Adam to choose an arbitrary sequence of letters. We define Eve's letter game for an automaton $\A$ over the product of $R_{A,\Sigma} \times \A$, and her expanded letter game over $R^\ast_{A,\Sigma} \times \A$; 
She wins a play if the word generated by Adam is not in $L(\A)$ or the path generated by her (resolving $\A$'s nondeterminism) and by Adam (resolving $\A$'s universality) satisfies $\A$'s acceptance condition.}
\label{fig:letter-giving-game}
\end{figure}

\Subject{Expanded letter games}
The definition of the letter games (\cref{def:LetterGames}) has the important advantage of being defined over a finite-arena. Yet, as a result, these games generally do not allow for positional determinacy. 

We provide below an expanded variant of the letter game that will have same unfolding as the original one, while being defined over an infinite arena. This will allow Eve to have positional determinacy in these games for Rabin automata.

Let $R^\ast_{A,\Sigma}$ be the $\Sigma$-arena with the set of positions $V=\Sigma^\ast$, all belonging to Adam, and the set of transitions $X$ of the form $\langle w, a, w\cdot \letter\rangle$ for each word $w\in\Sigma^\ast$ and letter $\letter\in\Sigma$ (see \cref{fig:letter-giving-game}). The initial position of this arena is $\epsilon$. The arena $R^\ast_{E,\Sigma}$ is the same, except that all the positions belong to Eve. We define \emph{Eve's expanded letter game} over $R^\ast_{A,\Sigma}\times \A$ and \emph{Adam's expanded letter game} over $R^\ast_{E,\Sigma}\times \A$ with the same winning conditions as in their (non expanded) variants.
The following follows directly from \cref{lem:unfolding}.

\begin{proposition}
For every automaton $\A$, the expanded letter games for $\A$ have the same winners as the (standard) letter games for $\A$.
\end{proposition}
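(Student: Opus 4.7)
The plan is to show that the two games have identical winning conditions and isomorphic unfoldings, and then invoke \cref{lem:unfolding} directly. The winning condition of Eve in either variant depends only on the sequence $(\letter_i,\gamma_i)_{i\in\Nat}$ of $\Sigma\times\Gamma$ labels produced along the play (it states that $(\letter_i)\notin L(\A)$ or the $(\gamma_i)$ satisfies the acceptance condition, and symmetrically for Adam), so the two winning conditions coincide as subsets of $(\Sigma\times\Gamma)^\omega$.

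It remains to exhibit an isomorphism between the unfoldings of $R_{A,\Sigma}\times\A$ from $(v,\iota)$ and of $R^{\ast}_{A,\Sigma}\times\A$ from $(\epsilon,\iota)$. Here I would first observe that $R^{\ast}_{A,\Sigma}$ is already a tree, hence it is its own unfolding, and the construction of the synchronised product (\cref{def:GameAutomataProd}) commutes with unfolding in the sense that the unfolding of a synchronised product from a chosen initial position is the synchronised product of the corresponding unfoldings. Since the unfolding of $R_{A,\Sigma}$ from its unique position $v$ is, by definition, the tree whose vertices are finite words in $\Sigma^{\ast}$ with edges $\langle w,\letter,w{\cdot}\letter\rangle$, this unfolding is canonically isomorphic to $R^{\ast}_{A,\Sigma}$ via the bijection sending a history $v\letter_0 v\letter_1\cdots v$ to $\letter_0\letter_1\cdots\in\Sigma^{\ast}$. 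Transporting this bijection through the product with $\A$ yields the desired isomorphism between the two full unfoldings, preserving the ownership of positions (Eve/Adam) and all transition labels.

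I would then conclude: by \cref{lem:unfolding}, Eve has a winning strategy in $R_{A,\Sigma}\times\A$ iff she has a winning strategy in $R^{\ast}_{A,\Sigma}\times\A$, i.e., Eve's standard letter game and Eve's expanded letter game have the same winner. The identical argument applied to $R_{E,\Sigma}$ and $R^{\ast}_{E,\Sigma}$ (swapping the roles of the players) settles Adam's letter games.

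The only delicate point is the careful verification that the unfolding of the product $R_{A,\Sigma}\times\A$ agrees with the product of the unfoldings, including the auxiliary positions of the form $(v,q,\letter,\psi)$ introduced to decompose Boolean formulas; but this is a purely mechanical check since these auxiliary positions sit entirely inside one ``round'' and the decomposition is determined by the pair $(q,\letter)$ that indexes the round, independently of the history. Everything else is routine.
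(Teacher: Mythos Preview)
Your proposal is correct and follows exactly the paper's approach: the paper simply states that the result ``follows directly from \cref{lem:unfolding}'', and your write-up spells out the routine verification (same winning condition, and $R^\ast_{A,\Sigma}$ is the unfolding of $R_{A,\Sigma}$, hence the products have isomorphic unfoldings) that justifies this one-line invocation.
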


\Subject{History Requirement}
Although  $R^\ast_{A,\Sigma}$ and $R^\ast_{E,\Sigma}$ are trees, 
the arenas of the expanded letter games are directed acyclic graphs as there can exist two distinct paths from the initial position to a~given position $(w,q)$. Thus, a priori, a winning strategy of a~player of such a game might need some history of a~play. However, as expressed by the following theorem, it is not the case.

\begin{theorem}\label{thm:ExpandedGamePositional}
If $\A$ is a Rabin (or parity) automaton then Eve's expanded letter game is positionally determined for Eve.
\end{theorem}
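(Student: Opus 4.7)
The plan is to reformulate Eve's expanded letter game as a Rabin game on an arena canonically isomorphic to $R^\ast_{A,\Sigma} \times \A$, and then invoke Proposition~\ref{prop:RabinPositionalDeterminacy}.

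Since $L(\A)$ is $\omega$\=/regular, fix a deterministic parity automaton $\D$ recognising the complement $L(\A)^\mathrm{c}$. Form the extended arena $\big(R^\ast_{A,\Sigma} \times \A\big) \times \D$, where $\D$ is treated as an automaton over $\Sigma\times\Gamma$ that inspects only the $\Sigma$\=/component of the labels carried by the edges of $R^\ast_{A,\Sigma} \times \A$. Reformulate Eve's winning condition on this extended arena as \emph{``the sequence of $\A$\=/transitions satisfies $\A$'s (parity or Rabin) condition, or the induced run of $\D$ accepts''}. Since a disjunction of a Rabin condition and a parity condition is again Rabin, the extended game is a Rabin game; and by the choice of $\D$, a play is winning for Eve there if and only if the corresponding play in $R^\ast_{A,\Sigma} \times \A$ is winning in the expanded letter game (the $\D$\=/run accepts precisely when the played word lies outside $L(\A)$).

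The crucial observation is that the extended arena is canonically isomorphic to $R^\ast_{A,\Sigma} \times \A$ via the projection that forgets the $\D$\=/component. Indeed, $\D$ advances only along the $(\letter,\gamma)$\=/labelled edges that conclude each one-step arena, so after the one-step arenas corresponding to a prefix $w\in\Sigma^\ast$ have been resolved, $\D$ is in the state $\delta_{\D}^{\ast}(\iota_{\D}, w)$, independently of the $\A$\=/choices made by either player. At intermediate positions of the form $(w\cdot\letter, q, \letter, \psi)$ inside the next one-step arena, $\D$ is still in state $\delta_{\D}^{\ast}(\iota_{\D}, w)$, which is again determined by the underlying position. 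Hence the $\D$\=/component of every reachable position of the extended arena is a function of the position in $R^\ast_{A,\Sigma} \times \A$, making the projection a bijection that preserves ownership, edges and labels.

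Applying Proposition~\ref{prop:RabinPositionalDeterminacy} to the extended Rabin game yields a positional winning strategy for Eve whenever she wins there. Pulling this strategy back along the isomorphism produces a positional winning strategy in the original expanded letter game, which completes the proof. The main point to check carefully is that the $\D$\=/component remains a function of position even at the intermediate one-step-arena vertices $(w,q,\letter,\psi)$; this is routine because $\D$ only transitions on the final edges of each one-step arena, so its state is never updated in the middle of one.
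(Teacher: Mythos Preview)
Your proof is correct and takes essentially the same approach as the paper: fix a deterministic parity automaton for $L(\A)^{\mathrm c}$, use that its state is determined by the word prefix $w$ already encoded in each position of $R^\ast_{A,\Sigma}\times\A$, rewrite Eve's winning condition as a Rabin condition on (an arena isomorphic to) the original, and apply Proposition~\ref{prop:RabinPositionalDeterminacy}. The only difference is cosmetic: the paper relabels the edges of $R^\ast_{A,\Sigma}$ directly with the DPA's priorities (so the arena is literally unchanged and no isomorphism argument is needed), whereas you take a synchronised product with $\D$ and argue it is isomorphic via projection.
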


\begin{proof}
We will show that the winning condition of this game can be represented as a Rabin condition and invoke \cref{prop:RabinPositionalDeterminacy}. Let $\D'$ be a deterministic parity automaton recognising the complement of the language $L(\A)$, for a Rabin automaton $\A$ over the alphabet $\Sigma$. Let $\Omega\colon Q_{\D'}\to\Nat$ be the priority assignment of $\D'$ (without loss of generality we can assume that the states of $\D'$ bear priorities). 

Consider the arena $R'^\ast_{A,\Sigma}$ that is derived from the arena $R^\ast_{A,\Sigma}$ by adding to transitions priorities according to the deterministic of runs of $D'$, that is, by changing the labelling of every transition $\langle w, \letter, w\cdot \letter\rangle$ to $\langle w, (\Omega_{\D'}(q),\letter), w\cdot \letter\rangle$, where $\letter\in\Sigma$ and $q$ is the state of $\D'$ reached after reading the word $w$ from the initial state of $\D'$.

Consider the product $R'^\ast_{A,\Sigma}\times \A$, in which for $\A's$ transitions we ignore these additional labels. The labels of that product are now of the form $(\ell,\letter,\gamma)$, where $\ell\in\Nat$ is a priority of $\D'$, $\letter\in\Sigma$, and $\gamma\in\Gamma$ is a label of $\A$. Notice that when one forgets about the first coordinate of the label, the game is equal to $R^\ast_{A,\Sigma}\times \A$. Moreover, given a~sequence of labels $(\ell_i,\letter_i,\gamma_i)_{i\in\N}$, by the choice of $\D'$, we know that $(a_i)_{i\in\N}\notin L(\A)$ if and only if the sequence $(\ell_i)_{i\in\N}$ satisfies the parity condition.

Define the game $\G$ over $R'^\ast_{A,\Sigma}\times \A$, in which Eve wins a play labelled by $(\ell_i,\letter_i,\gamma_i)_{i\in\Nat}$ if
\[\text{$(\ell_i)_{i\in\Nat}$ satisfies the parity condition of $\D'$ or $(\gamma_i)_{i\in\Nat}$ satisfies the Rabin condition of $\A$.}\]
Notice that both disjuncts above can be written as Rabin conditions and therefore $\G$ is positionally determined for Eve. Moreover, the choice of $\D'$ guarantees that the new winning condition is equivalent to Eve's condition in her expanded letter game on $\A$---the same plays are winning for Eve in $\G$ and her expanded letter game on $\A$. Since the structure of the game is also preserved, it means that Eve's expanded letter game on $\A$ is positionally determined for Eve.
\end{proof}

\begin{remark}
Dually, Adam's expanded letter game for a Streett automaton is positionally determined for Adam.
\end{remark}

As a consequence  of \cref{thm:ExpandedGamePositional}, for alternating $\EGFG$ Rabin automata, a strategy for Eve to resolve the nondeterminism may ignore the history of the play, and only consider the history of the word read, as is the case for nondeterministic GFG automata.

We will now argue that Eve's positional strategy $\stratE$ in the expanded letter game on an alternating automaton $\A$ can be represented as a function $\stratE'\colon \Sigma^+\rightarrow \boxes_\A$ that assigns to each word $w\letter$ a box $\beta_{w\letter}\in\boxes_{\tuple{\A,\letter}}$. Indeed, let $(w\cdot\letter)\in\Sigma^+$ and let $V_{w\letter}$ be the set of positions of $R^\ast_{A,\Sigma}\times \A$ of the form $(w,q)$, $(w\cdot \letter, q,\letter,b)$, or $(w\cdot \letter, q)$ for $q\in Q$ and $b\in \BP(Q)$. Observe that the partial arena of $R^\ast_{A,\Sigma}\times \A$ restricted to $V_{w\letter}$ is isomorphic to the one-step arena $R_\letter\times \A$. Thus, $\stratE$ provides a positional strategy over this arena, which by \cref{prop:choice-to-strat} can be encoded as a box $\beta_{w\letter}$. More formally, let $\beta_{w\letter}$ contain $(q,a,q')$, if there is a~play consistent with $\stratE$ that visits both the positions $(w,q)$ and then $(w\letter,q')$.

Then, in the next lemma we show that if $\stratE$ is also winning, then the sequences of boxes $\beta_{w\letter}$ only has accepting paths.

\begin{definition}
\label{def:univ-acc-box}
Consider an automaton $\A$ with states $Q$ and initial state $\iota$, and an infinite word $u=\beta_0,\beta_1,\ldots\in(\boxes_\A)^\omega$. We say that a sequence of transitions $\rho=(q_i,\letter_i,q_{i+1})_{i\in\Nat}$ is a \emph{path of $u$} if $q_0=\iota$ and for every $i\in\Nat$, we have $(q_i,\letter_i q_{i+1}) \in \beta_i$. The word $u$ is \emph{universally accepting for $\A$} if each of its paths satisfies the acceptance condition of $\A$.
\end{definition}

\begin{lemma}
\label{lem:strat-to-win-boxes}
Given an alternating $\EGFG$ Rabin automaton $\A$, there is a positional strategy $\stratE$ in her expanded letter game on $\A$ such that
for every word $w\in L(\A)$ the sequence of boxes $u=\beta_0,\beta_1,\ldots\in(\boxes_\A)^\omega$ defined as $\beta_i=\stratE(w\restr_{i+1})$ is universally accepting for $\A$.
\end{lemma}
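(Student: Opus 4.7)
The plan is to obtain the positional strategy $\stratE$ in the expanded letter game directly from the assumption that $\A$ is $\EGFG$, and then to translate every path of the resulting sequence of boxes into a play of the expanded letter game consistent with $\stratE$. Since $\A$ is $\EGFG$, Eve has a winning strategy in her (ordinary) letter game. By \cref{thm:ExpandedGamePositional} (which applies because $\A$ is Rabin), Eve's expanded letter game is positionally determined for Eve, so one can choose a \emph{positional} winning strategy $\stratE$ in that game. This is the strategy we will use.

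Next, I would carry out the translation $\stratE \mapsto \stratE'\colon \Sigma^+\to \boxes_\A$ described in the paragraph preceding the lemma. For each $w\cdot\letter\in\Sigma^+$, the sub-arena of $R^\ast_{A,\Sigma}\times\A$ spanned by the positions of the form $(w,q)$, $(w\cdot\letter,q,\letter,b)$ and $(w\cdot\letter,q)$ is isomorphic to the one-step arena $R_\letter\times\A$, so the restriction of $\stratE$ to this sub-arena is a positional strategy of Eve in $R_\letter\times\A$. By \cref{prop:choice-to-strat} this is encoded by a unique box $\beta_{w\letter}\in\boxes_{\langle\A,\letter\rangle}$, and we set $\stratE'(w\letter):=\beta_{w\letter}$. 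For $w\in\Sigma^\omega$, write $\beta_i:=\stratE'(w\restr_{i+1})$.

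The heart of the argument is the following bridge. Take any path $\rho=(q_i,w_i,q_{i+1})_{i\in\Nat}$ of the sequence $\beta_0,\beta_1,\ldots$ (in the sense of \cref{def:univ-acc-box}). For each $i$, membership $(q_i,w_i,q_{i+1})\in\beta_i$ unfolds, by the very definition of the box of $\stratE\restr_{R_{w_i}\times\A}$, into a play in the one-step arena over $w_i$ starting at $(v,q_i)$, ending at $(v',q_{i+1})$, and consistent with $\stratE\restr_{R_{w_i}\times\A}$. Concatenating these one-step plays yields a single play $\pi$ of the expanded letter game: it reads exactly the word $w$ along the arena $R^\ast_{A,\Sigma}$, follows $\rho$ on the $\A$-component, and is consistent with $\stratE$ (which is positional, so its choices inside each one-step arena depend only on the current word-position $w\restr_{i+1}$). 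Now assume $w\in L(\A)$; then Eve cannot win $\pi$ through the disjunct ``$w\notin L(\A)$'', so since $\stratE$ is winning she must win through the second disjunct, i.e.\ the sequence of $\Gamma$-labels generated along $\pi$ (which are precisely the labels of $\rho$ in $\A$) satisfies the Rabin condition of $\A$. Hence $\rho$ is accepting. As $\rho$ was an arbitrary path of $\beta_0,\beta_1,\ldots$, the sequence is universally accepting.

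The main obstacle I expect is not conceptual but bookkeeping: making the identification between ``membership in a box'' and ``a one-step play consistent with $\stratE$'' precise, and checking that concatenating these one-step plays really produces a valid play of $R^\ast_{A,\Sigma}\times\A$ consistent with the positional $\stratE$. This is essentially the content of \cref{prop:choice-to-strat} applied inductively along the word $w$, together with positionality; once the alignment of sub-arenas with one-step arenas is pinned down, the rest is a routine unfolding.
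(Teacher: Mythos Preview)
Your proposal is correct and follows essentially the same approach as the paper: obtain a positional winning strategy from \cref{thm:ExpandedGamePositional}, extract boxes via the one-step sub-arenas, and then, for any path of the box sequence, use positionality to assemble a single play of the expanded letter game consistent with $\stratE$, concluding via the winning condition that the path is accepting when $w\in L(\A)$. The paper's proof is terser (it simply asserts that positionality yields a single play visiting all positions $(w\restr_i,q_i)$), but the underlying argument is the same.
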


\begin{proof}
Consider words $w$ and $u$ as above. Let $\rho=(q_i,\letter_i,q_{i+1})_{i\in\Nat}$ be a path of $u$. Since the strategy $\stratE$ is positional, the definition of $\stratE(w\restr_{i+1})$ implies that there exists a~single play of the expanded letter game that visits all the positions of the form $(w\restr_i,q_i)$ for $i\in\N$. Since $w\in L(\A)$, the winning condition of the expanded letter game guarantees that the path $\rho$ must be accepting.
\end{proof}

Observe that the above arguments do not hold for alternating GFG Streett automata: Since Streett games are not positionally determined for Eve, Eve's expanded letter game for a Streett automaton is not positionally determined for Eve (an analogous of \cref{thm:ExpandedGamePositional} does not hold). Furthermore, we provide in \cref{fig:StreetGFGHistory} an example of an alternating GFG Streett automaton, in which Eve cannot resolve her nondeterminism only according to the history of the word read.

\begin{proposition}
Consider an $\EGFG$ alternating Streett automaton $\A$ with transition conditions in DNF. Then Eve might not have a strategy~$\stratE\colon \Sigma^+\rightarrow \boxes_\A$ satisfying Lemma~\ref{lem:strat-to-win-boxes}.
\end{proposition}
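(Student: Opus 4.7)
The plan is to construct an explicit alternating Streett automaton $\A$ with transition conditions in DNF that is $\EGFG$ but for which no function $\stratE'\colon\Sigma^+\to\boxes_\A$ produces universally accepting box sequences for all $w\in L(\A)$. The underlying principle is the well-known failure of positional determinacy for Eve in Streett games. Observe that positional strategies of Eve in her expanded letter game over $R^\ast_{A,\Sigma}\times\A$ correspond exactly to box-sequence functions $\stratE'\colon\Sigma^+\to\boxes_\A$: at a position $(w,q)$, a positional strategy specifies Eve's choices in the one-step arena over the next letter, and collecting these choices for all states $q$ at word prefix $w$ yields precisely a box. Hence the proposition reduces to exhibiting an alternating Streett automaton whose expanded letter game is won by Eve but \emph{not} positionally.

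To produce such an $\A$, I would embed a small Streett game that requires finite (but non-positional) memory for Eve into the acceptance game of an alternating automaton. The construction would feature: an initial universal split $\delta(q_0,a)=q_1\land q_2$ giving Adam control over which ``thread'' of the run is active; transitions from the threads that funnel the play into a shared nondeterministic choice point whose resolution by Eve affects which Streett pairs are satisfied; and Streett pairs $(B_i,G_i)$ arranged so that the correct choice at the shared state depends on which thread Adam has steered the current branch into. Because a box commits to a single choice per state per step, uniformly over all paths, such a shared choice cannot distinguish between the two threads; whereas in the expanded letter game Eve has access to enough information through the current position $(w,q)$ together with Adam's past conjunction resolutions (encoded in finite memory, not in the word prefix $w$) to make the thread-dependent choice and satisfy all Streett pairs.

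To conclude I would verify two claims about the constructed $\A$: first, that $\A$ is $\EGFG$, by exhibiting an explicit finite-memory winning strategy of Eve in the expanded letter game that tracks the relevant information about Adam's universal choices; second, that for every candidate $\stratE'\colon\Sigma^+\to\boxes_\A$, one can construct a word $w\in L(\A)$ together with a path through the induced box sequence that witnesses a violation of one of the Streett pairs. The second claim would be established by a case analysis on the choice made by each box at the shared state at each step: either Eve's uniform choice matches thread $1$ (in which case Adam can force infinitely many visits of thread $2$'s $B_2$ without ever triggering $G_2$), or it matches thread $2$ (symmetric failure), and alternating between the two choices across steps cannot remedy both failures simultaneously, because Adam can align his universal choices to always pick the ``wrong'' thread at every step.

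The main obstacle is the careful design of the example so that the requisite memory is genuinely path-dependent rather than step-dependent. As the word prefix length already provides Eve with unbounded effective memory in the expanded game, naive constructions where Eve simply needs to alternate between two goal states are easily satisfied by a time-varying box sequence. The construction must therefore force a \emph{sharing} of a nondeterministic choice point between threads whose Streett demands differ, so that Eve's inability to distinguish threads from a single box is decisive. This tension between the path-dependent memory required by Streett and the position-based expressivity of box-sequence strategies is what drives the separation.
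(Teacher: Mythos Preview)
Your approach is correct and essentially the same as the paper's: the paper gives exactly the automaton you describe---a singleton-alphabet ASW with an initial $\land$-split $q_0\to q_1,q_2$, both branches merging at a shared $\lor$-state $q_3\to q_4,q_5$, and Streett pairs ``(finitely $t_1$ or infinitely $t_4$) and (finitely $t_2$ or infinitely $t_5$)'' forcing the choice at $q_3$ to depend on whether the path came through $q_1$ or $q_2$. The paper's proof is simply the concrete instantiation of your sketch; the only thing you leave implicit that the paper makes explicit is the use of a singleton alphabet, which ensures the word prefix carries no information beyond its length.
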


\begin{proof}
Consider the ASW depicted in \cref{fig:StreetGFGHistory}. It is $\EGFG$, as witnessed by the strategy that chooses the transition $t_4$ in $q_3$ if the last visited state was $q_1$ and $t_5$ otherwise. Yet, there is no strategy that only remembers the word read so far, as this only gives the length of the word, and cannot help in determining whether the path visited $q_1$ or $q_2$.
\end{proof}

\begin{figure}
\centering
\begin{tikzpicture}

   \node (b) at (0, 0.75) {};
	\node[state, scale=0.9] (q0) at (0, 0) {$q_0$};
	\node[state, scale=0.9] (q1) at ($(q0)+(-1,-1)$) {$q_1$};
	\node[state, scale=0.9] (q2) at ($(q0)+(+1,-1)$) {$q_2$};
	\node[state, scale=0.9] (q3) at ($(q0)+(0,-1.75)$) {$q_3$};
	\node[state, scale=0.9] (q4) at ($(q3)+(-1,-1)$) {$q_4$};
    \node[state, scale=0.9] (q5) at ($(q3)+(+1,-1)$) {$q_5$};

\node [bool] (b0) at ($(q0)+(0,-0.5)$) {$\land$};
\node [bool] (b3) at ($(q3)+(0,-0.5)$) {$\lor$};

\draw[trans] (b) edge (q0);
\draw[trans] (b0) edge node[scale=0.8, below] {$t_{1}$} (q1);
\draw[trans] (b0) edge node[scale=0.8, below] {$t_{2}$} (q2);
\draw[trans] (q1) edge  (q3);
\draw[trans] (q2) edge  (q3);
\draw[trans] (b3) edge node[scale=0.8, below] {$t_4$} (q4);
\draw[trans] (b3) edge node[scale=0.8, below] {$t_5$} (q5);
\draw[trans] (q4) edge[bend left=70] (q0);
\draw[trans] (q5) edge[bend right=70] (q0);

\node (AC) at (6,-0.5)  [align=left]{Acceptance condition:\\(Finitely often $t_1$ or Infinitely often $t_4$) and\\(Finitely often $t_2$ or Infinitely often $t_5$)};

\end{tikzpicture}
\caption{A GFG ASW over  a singleton alphabet, for which Eve's $\EGFG$ strategy cannot only remember the prefix of the word read so far, but also some history about the visited states.}
\label{fig:StreetGFGHistory}
\end{figure}



\subsection{Alternation Removal in GFG Parity Automata}
\label{app:alt-rem-gfg-rabin}

This section presents the proof of the following theorem:

\thmexpgfgdealt*


\lempropofd*

\begin{proof}
Notice that it is easy to construct a nondeterministic parity automaton $\S$ over the alphabet $\boxes_\A$ that recognises the complement of the set of universally-accepting words for $\A$---it is enough to guess a path that is not accepting, and have the acceptance condition that is the dual of $\A$'s condition. If $\A$ is a B\"uchi automaton, then $\S$ is a coB\"uchi one.
Formally, for an alternating parity (resp.\ B\"uchi) automaton $\A=\tuple{\Sigma, Q, \iota, \delta, \alpha}$, we define the nondeterministic parity (resp.\ coB\"uchi) automaton $\S=\tuple{\boxes_\A, Q, \iota, \delta_\S, \overline{\alpha}}$, where $\overline{\alpha}$ is the dual of $\alpha$ and $\delta_\S$ is defined as follows.
For every states $q,q'\in Q$ and box $\beta\in\boxes_\A$, we have $q'\in\delta_\S(q,\beta)$ iff $\trans{q}{\letter}{q'}\in\beta$ for some $\letter$.

Now, one can translate $\S$ to an equivalent deterministic parity automaton $\B'$ with $2^{O(nk \log nk)}$ states \cite{Pit07} 
and then complement the acceptance condition of $\B'$, getting the required automaton $\B$.

Since nondeterministic coB\"uchi automata can be determinised into deterministic coB\"uchi automata, if $\A$ is B\"uchi, so is $\B$. In general, the parity index of the automaton $\B$ is linear in the number of transitions~of~$\A$.
\end{proof}

We now proceed to the construction of the automaton $\boxA$ of \cref{thm:exp-gfg-dealt}. It is the same as the automaton $\B$ of \cref{lem:prop-of-d}, except that the alphabet is $\Sigma$ and the transition function is defined as follows: For every state $p$ of $\boxA$ and $\letter\in\Sigma$, we have $\delta_{\boxA}(p,\letter):= \cup_{\beta\in\boxes_{\tuple{\A,\letter}}} \delta_{\B}(p,\beta)$.

In other words, the automaton $\boxA$ reads a~letter $\letter$, nondeterministically guesses a~box $\beta\in\boxes_{\A,\letter}$, and follows the transition of $\B$ over $\beta$. Thus, the runs of $\boxA$ over a~word $w\in\Sigma^\omega$ are in bijection between sequences of boxes $(\beta_i)_{i\in\Nat}$ such that $\beta_i\in\boxes_{\A,w_i}$ for $i\in\N$.


Fix an infinite word $w\in\Sigma^\omega$. Our aim is to prove that $w\in L(\A)\Leftrightarrow w\in L(\boxA)$. 

\lempositionalstrategiesandruns*

\begin{proof}
Consider a run of $\boxA$ over $w$, and observe that it corresponds to a sequence of boxes $\beta_0,\ldots$. Notice that each box $\beta_i$ corresponds to Eve's choices in $\A$ over $w_i$, and therefore provides a positional strategy for Eve in the one-step arena $R_{w_i}\times \A$. The sequence of these choices provides a positional strategy for Eve in $R_w\times \A$. 

Dually, given a positional strategy for Eve in $R_w\times \A$, one can extract a sequence of strategies for Eve in the one-step arenas $R_{w_i}\times \A$, and each of them corresponds to a box $\beta_i$. \cref{prop:choice-to-strat} shows that each path in $\beta_0,\ldots$ corresponds to a play consistent with the constructed strategy and vice versa: each play gives rise to a path.

Now, a run is accepting if and only if the sequence of boxes is universally accepting, which means exactly that all the plays consistent with the corresponding strategy are winning.
\end{proof}

We now show that the automaton $\boxA$ is also GFG.

\NotNeeded{
\begin{remark}
The above alternation-removal procedure does not work for an alternating Streett automaton $\A$, even if ignoring the issue of GFGness: Since Streett games are not positionally determined for Eve, the acceptance game of $\A$ over a word $w$ is not positionally determined for Eve.
That is, a winning strategy for Eve in this game might make different choices at a position $\tuple{q,w_i}$, depending on the path leading to this position. Such a strategy does not correspond to a sequence of boxes, and therefore the analogous of \cref{lem:PositionalStrategiesAndRuns} does not hold.
\end{remark}
}

\lemgfgpreservation*

\begin{proof}
Let $\stratE$ be a positional winning strategy for Eve in her expanded letter game for $\A$ (over the arena $R^\ast_{A,\Sigma}\times \A$). The proof is based on the construction of the function $\stratE'\colon \Sigma^+\rightarrow \boxes_\A$, see the paragraph before \cref{def:univ-acc-box}.

Consider the following way of resolving the nondeterminism of $\boxA$: after reading $w\in\Sigma^\ast$, when the next letter $\letter\in\Sigma$ is provided, the automaton moves to the state $\delta_\B(p,\beta_{w\letter})$ where $\beta_{w\letter}=\stratE'(w\letter)$. Consider an infinite word $w\in L(\A)$ and let $\beta_0,\ldots$ be the sequence of boxes used to construct the run of $\boxA$ over $w$. Lemma~\ref{lem:strat-to-win-boxes} implies that this sequence is universally accepting and therefore, the constructed run of $\B$ must also be accepting.
\end{proof}

\subsection{Single-Exponential Determinisation of Alternating Parity GFG Automata}
\label{ap:ssec:exp-det-of-alt}

The aim of this section is to prove the following determinisation theorem.

\thmdet*


\NotNeeded{
Observe that \cref{thm:exp-gfg-dealt} can be applied both to the language $L(\A)$ and its complement. Therefore, we can fix a~pair of nondeterministic GFG parity automata $\boxA$ and $\boxAco$ that recognise $L(\A)$ and $L(\A)^\mathrm{c}$ respectively and are both of size exponential in $\A$.

We now use the automata $\A$, $\boxA$, and $\boxAco$ to construct two auxiliary games.

The game $G'(\A)$ proceeds from a configuration consisting of a pair $(p,q)$ of states from $\boxAco$ and $\A$ respectively, starting from their initial states, as follows:
\begin{itemize}
\item Adam chooses a letter $\letter\in \Sigma_{\A}$;
\item Eve chooses a transition $\trans{p}{\letter}{p'}$ in $\boxAco$;
\item Adam and Eve play on the one-step arena over $\letter$ from $q$ to a new state $q'$.
\end{itemize}

A play in $G'$ consists of a run $\rho$ in $\boxAco$ and a path $\rho'$ in $\A$. It is winning for Eve if either $\rho$ is accepting in $\boxAco$ (in which case $w\notin L(\A)$, or $\rho'$ is accepting in $\A$.

A more formal definition is given in the appendix.

If $\A$ is $\EGFG$ and $\boxAco$ is GFG, Eve has a winning strategy in $G'$ consisting of building a run in $\boxAco$ using her GFG strategy in $\boxAco$ and a path in $\A$ using her $\EGFG$ strategy in $\A$. Thsi guarantees that if $w\in L(\A)$ then the path in $\A$ is accepting, and otherwise the run in $\boxAco$ is accepting.

Furthermore, the winning condition of $G'$ is a disjunction of parity conditions, that is, a Rabin condition; Eve therefore also has a positional winning strategy.
}

First consider the synchronised product $R_{A,\Sigma}\times \boxAco$, which is a~game with labels of the form $\Sigma\times \Gamma_{\boxAco}$, where $\Gamma_{\boxAco}$ is the parity condition of $\boxAco$. Now, we can treat the automaton $\A$ as an~automaton over the alphabet $\Sigma\times \Gamma_{\boxAco}$ that just ignores the second component of the given letter. Thus, we can define a~game $G(\A)= \big(R_{A,\Sigma}\times \boxAco\big)\times \A$.

Notice that $G(\A)$ is naturally divided into rounds, between two consecutive positions of the form $(v,p,q)$, where $v$ is the unique position of $R_{A,\Sigma}$, $p$ is a~state of $\boxAco$, and $q$ is a~state of $\A$. Such a~round, starting in $(v,p,q)$, consists of first Adam choosing a~letter $\letter$; then Eve resolving nondeterminism of $\boxAco$ from $p$ over $\letter$; and then both players playing the game corresponding to the transition condition $\delta(q,\letter)$ of $\A$.

Let the winning condition of $G(\A)$ say that either the sequence of transitions of $\boxAco$ is accepting or the sequence of transitions of $\A$ is accepting. Since $\A$ is $\EGFG$ and $\boxAco$ is GFG, we know that Eve has a~winning strategy in $G(\A)$: she just plays her GFG strategies in both automata and is guaranteed to win whether the word produced by Adam is in $L(\A)$ or $L(\boxAco)$.

As the winning condition of $G(\A)$ is a disjunction of two Rabin conditions, Eve has a positional winning strategy. Fix such a strategy $\stratE$.

\NotNeeded{
\begin{remark}
There is some magic here, as both the GFG strategies of Eve in $\A$ and in $\boxAco$ may require exponential memory. Yet, when she needs to satisfy the disjunction of the two conditions, no more memory is needed. In a sense, the states of $\A$ provide the memory for $\boxAco$ and the states of $\boxAco$ provide the memory for $\A$.  (cf.\  
\cite[Theorem 4]{BKKS13}).
\end{remark}

The second auxiliary game $G'(A)$ is similar, except that Adam is given control of $\boxA$ and Eve is in charge of letters. That is, in $G'(A)$, the configurations are again pairs $(p,q)$ of states from $\boxA$ and $\A$ respectively, and the game proceeds, starting from the initial states, as follows:
\begin{itemize}
\item Eve chooses a letter $\letter\in \Sigma$;
\item Adam chooses a transition $\trans{p}{\letter}{p'}$ in $\boxA$;
\item Adam and Eve play on the one-step arena over $\letter$ from $q$ to a new state $q'$.
\end{itemize}

This time Adam wins a play consisting of a run $\rho$ of $\boxA$ and a path $\rho'$ in $\A$ if either the path is rejecting of $\A$ is rejecting  or the run of $\boxA$ is accepting.

Accordingly, if $\A$ is GFG, then he can win by using the GFG strategy in $\boxA$ and the $\AGFG$ strategy in $\A$. Then if $w\in L(\A)$, the run in $\boxA$ is accepting, and otherwise the path of $\A$ is rejecting.
As before, he also has a positional winning strategy in $G'$.

}

Now do the same with $\A$ and $\boxA$ for Adam: define $G'(\A)$ as $\big(R_{E,\Sigma}\times \overline{\boxA}\big)\times \A$, where $\overline{\boxA}$ is the automaton $\boxA$ where the transitions are turned from nondeterministic to universal, i.e,\ we replace $\lor$ with $\land$.

Again, in a~round of $G'(\A)$ from a~position $(v,p,q)$: Eve plays a letter $\letter$; Adam resolves nondeterminism of $\boxA$ (i.e., the universality in its dual); then they both resolve the choices in $\A$. Let Adam win $G$ if either the play of $\A$ is rejecting or the run of $\boxA$ is accepting. Again we can ensure that Adam has a~winning strategy in $G'(\A)$, because both automata are GFG: he uses the GFG strategy of $\boxA$ and the $\AGFG$ strategy over $\A$. If the word given by Eve belongs to $L(\A)$ then Adam wins by producing an accepting run of $\boxA$, otherwise he wins by refuting an accepting run of $\A$. Let $\stratA$ be his positional winning strategy in that game.

We are now ready to build the deterministic automaton from a GFG APW $\A$, using positional winning strategies $\stratE$ and $\stratA$ for Eve and Adam in $G'(\A)$ and $G(\A)$, respectively.
 
Let $\D$ be the automaton with states of the form $(q,p_1,p_2)$, with $q$ a~state of $\A$, $p$ a~state of $\boxA$ and $p'$ a state of $\boxAco$. 
A transition of $D$ over $\letter$ moves to $(q',p_1',p_2')$ such that $((q,p_1),(q',p_1))$ is consistent with $\stratA$ and $((q,p_2'),(q',p_2'))$ is consistent with $\stratE$. In other words, 
when reading a~letter $\letter$ in such a~state, the following computations are performed:
\begin{enumerate}
\item We simulate the choices made by $\stratE$ in $G'(\A)$ upon obtaining $\letter$ from Adam. This way we know how to resolve nondeterminism of $\boxAco$ and what to do with disjunctions inside~$\A$.
\item We simulate the choices made by $\stratA$ in $G(\A)$ upon obtaining $\letter$ from Eve. This way we know how to resolve nondeterminism in $\boxA$ and what to do with conjunctions of $\A$.
\item In the end we proceed to a new state of $\A$ and resolved nondeterminism of both $\boxA$ and $\boxAco$.
\end{enumerate}

The acceptance condition of $\D$ is inherited from $\A$.

\lemdeteq*

\begin{proof}
Take a word $w\in\Sigma^\omega$. First assume that $w\in L(\A)$. Eve cannot win a play of the game $G$ with the letters played in $R_{A,\Sigma}$ coming from $w$ using by the first disjunct of her winning condition, since $L(\boxAco)=L(\bar\A)$. Thus, all the plays over $w$ consistent with her winning strategy $\stratE$ in $\G'$ must guarantee that the constructed path of $\A$ is accepting. Thus, the run of the automaton $\D$ over $w$ is accepting.

Now assume that $w\notin L(\A)$. Dually, no play of the game $G'$ with the letters coming from $w$ can produce an accepting run of $\boxA$ over $w$. Thus, the strategy $\stratA$ guarantees that the sequence of visited states of $\A$ is rejecting. Thus, the run of $\D$ over $w$ must be rejecting.
\end{proof}


\section{Appendix of \cref{sec:deciding}}
\label{ap:deciding}

\subsection{Proof of \cref{thm:exp-time-gfg-alt}}
\label{app:exptime-gfg-alt}

Our aim is to provide an \exptime{} algorithm for deciding if a given alternating parity automaton is GFG.

Recall the construction of the two nondeterministic parity automata $\boxA$ and $\boxAco$ for $L(\A)$ and $L(\A)^c$ respectively, as defined in \cref{ssec:exp-det-of-alt}. We will use these automata to design a~game characterising the fact that $\A$ is both $\EGFG$ and $\AGFG$, i.e,\ $\A$ is just GFG.

Recall that the automata $\boxA$ and $\boxAco$ have exponential number of states in the number of states of $\A$. However, due to \cref{lem:prop-of-d} their parity index is linear in the number of transitions of $\A$. Consider the game $G''=\big(R_{A,\Sigma}\times \boxA\big)\times \boxAco$, i.e,\ the game where Adam plays a~letter and Eve replies with two boxes, one of $\A$ and the other of $\overline{\A}$. Let the winning condition of that game for Eve say that either of the runs of $\boxA$ or $\boxAco$ must be accepting.

\begin{lemma}
Eve has a~winning strategy in $G''$ if and only if $\A$ is GFG.
\end{lemma}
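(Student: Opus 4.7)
The plan is to prove the two implications separately, reducing each to the GFGness of the two nondeterministic parity automata $\boxA$ and $\boxAco$, and then invoking \cref{lem:GFG-preservation} and \cref{lem:GFG-B-to-A} (the latter applied also to the dual automaton $\overline{\A}$).

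First I would prove the easier ``if'' direction. Suppose $\A$ is GFG. Since $\A$ is $\EGFG$, \cref{lem:GFG-preservation} gives that $\boxA$ is GFG; dually, since $\overline{\A}$ is $\EGFG$ and $\boxAco$ is obtained by applying the construction of \cref{thm:exp-gfg-dealt} to $\overline{\A}$, the same lemma yields that $\boxAco$ is GFG as well. Let $\sigma_1$ and $\sigma_2$ be winning strategies witnessing GFGness of $\boxA$ and $\boxAco$ respectively. Eve's strategy in $G''$ is simply to play $\sigma_1$ and $\sigma_2$ in parallel, using Adam's letter $\letter$ to drive both components. For any infinite word $w$ produced by Adam, either $w\in L(\A)=L(\boxA)$, in which case $\sigma_1$ produces an accepting run of $\boxA$, or $w\notin L(\A)$, i.e.,\ $w\in L(\A)^\mathrm{c}=L(\boxAco)$, in which case $\sigma_2$ produces an accepting run of $\boxAco$. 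In both cases the disjunctive winning condition of $G''$ is satisfied.

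For the ``only if'' direction, suppose Eve has a winning strategy $\sigma$ in $G''$. I would extract from $\sigma$ two separate strategies $\sigma_1$ and $\sigma_2$ that resolve the nondeterminism of $\boxA$ and $\boxAco$ respectively (each as a function from the word read so far to a next transition of the corresponding automaton). For any word $w\in L(\A)$, the run of $\boxAco$ on $w$ produced by $\sigma_2$ cannot be accepting, because $w\notin L(\boxAco)$; hence the winning condition of $G''$ forces the run of $\boxA$ produced by $\sigma_1$ to be accepting. This says exactly that $\sigma_1$ is a GFG strategy witnessing that $\boxA$ is GFG. Symmetrically, $\sigma_2$ witnesses that $\boxAco$ is GFG. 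Finally, \cref{lem:GFG-B-to-A} applied to $\A$ gives that $\A$ is $\EGFG$, and the same lemma applied to $\overline{\A}$ (for which $\boxX{\overline{\A}}=\boxAco$) gives that $\overline{\A}$ is $\EGFG$, which is by definition equivalent to $\A$ being $\AGFG$. Together these show that $\A$ is GFG.

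The main conceptual point, and the only mildly subtle step, is the backward direction: one must observe that because $L(\boxA)$ and $L(\boxAco)$ partition $\Sigma^\omega$ into complementary sets, the disjunctive winning condition of $G''$ forces each of the two component strategies extracted from $\sigma$ to be a full GFG witness on its own language, rather than just winning jointly. Once this is noticed, the rest is bookkeeping using the lemmas already established.
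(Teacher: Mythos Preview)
Your proposal is correct and follows essentially the same approach as the paper's own proof: both directions go through the GFGness of $\boxA$ and $\boxAco$, using \cref{lem:GFG-preservation} for the ``if'' direction and the complementarity of $L(\boxA)$ and $L(\boxAco)$ together with \cref{lem:GFG-B-to-A} for the ``only if'' direction. Your write-up is in fact slightly more explicit than the paper's in naming the lemmas and spelling out the duality $\boxAco=\boxX{\overline{\A}}$ and ``$\overline{\A}$ is $\EGFG$'' $\Leftrightarrow$ ``$\A$ is $\AGFG$'', but the argument is the same.
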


\begin{proof}
Clearly if $\A$ is GFG then both $\boxA$ and $\boxAco$ are GFG as nondeterministic automata. Therefore, one can use strategies witnessing their GFGness to construct a single strategy for Eve in $G''$. This strategy must be winning, because each word proposed by Adam either belongs to $L(\boxA)=L(\A)$ or to $L(\boxAco)=L(\A)^c$.

Now assume that Eve has a~winning strategy in $G''$. This strategy consists of two components: one is a strategy in $R_{A,\Sigma}\times \boxA$ and the other in $R_{A,\Sigma}\times \boxAco$. By the fact that the languages of $\boxA$ and $\boxAco$ are disjoint, the above components are in fact winning strategies in the letter games for $\boxA$ and $\boxAco$ respectively. Thus, by \cref{lem:GFG-B-to-A} we know that $\A$ is both $\EGFG$ and $\AGFG$.
\end{proof}

What remains is to show how to solve the game $G''$ in \exptime{}. Let $n$ be the size of the automaton $\A$. Our aim is to turn it into a~parity game of size exponential in $n$ but with a~number of priorities polynomial in $n$. Then, by invoking for instance \cite{CJKLS17}, we know that such a~game can be solved in \exptime.

\begin{lemma}
\label{lem:aut-for-disjunction}
Let $\Gamma=\{0,\ldots,N\}$ be a~set of priorities. Then, there exists a~deterministic parity automaton of size exponential in $N$, with a~number of priorities polynomial in $N$ that recognises the language $L$ of words $w\in(\Gamma\times\Gamma)^\omega$ that satisfy the parity condition on at least one coordinate.
\end{lemma}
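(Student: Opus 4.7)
The plan is to view $L$ as a Rabin language with linearly many pairs, and then to apply a standard Rabin-to-parity determinisation.

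First, I would observe that, for each coordinate $j\in\{1,2\}$, the parity condition applied to the $j$-th projection of a word in $(\Gamma\times\Gamma)^\omega$ can be written as a Rabin condition with at most $\lceil (N{+}1)/2\rceil$ pairs: for each even priority $i\in\Gamma$, take the pair $(B_i^{(j)},G_i^{(j)})$ where $G_i^{(j)}\subseteq\Gamma\times\Gamma$ consists of the letters whose $j$-th coordinate equals $i$ and $B_i^{(j)}$ consists of the letters whose $j$-th coordinate is strictly greater than $i$. A sequence fulfills parity on coordinate $j$ iff some even $i$ is visited infinitely often with no strictly larger priority visited infinitely often, which is precisely the Rabin condition above. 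Taking the disjunction over $j\in\{1,2\}$ amounts to the set-theoretic union of the two families, so $L$ is recognised by a \emph{one-state} deterministic Rabin automaton over the alphabet $\Gamma\times\Gamma$ with $k=O(N)$ pairs, where the pair membership of a transition is directly read off the letter.

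Second, I would invoke the classical Rabin-to-parity determinisation based on the Index Appearance Record construction (a special case of the Gurevich--Harrington / LAR style construction used for Muller conditions). Given a deterministic Rabin automaton with $s$ states and $k$ pairs, this construction produces an equivalent deterministic parity automaton with $O(s\cdot k\cdot k!)$ states and $2k{+}1$ priorities. Applied to our one-state Rabin automaton with $s=1$ and $k=O(N)$, it yields a deterministic parity automaton of size $2^{O(N\log N)}$, which is exponential in $N$, with $O(N)$ priorities, as required.

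I do not expect any genuine obstacle: the content of the lemma is precisely the observation that a disjunction of two parity conditions is a Rabin condition of linear index, combined with a routine invocation of the standard Rabin-to-parity construction. The only point worth verifying carefully is that the input alphabet $\Gamma\times\Gamma$ does not inflate the state space, which holds because the Rabin pairs depend only on the current letter and so the underlying Rabin automaton remains a single state regardless of $|\Gamma|$.
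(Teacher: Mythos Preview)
Your argument is correct. The key observation---that a disjunction of two parity conditions over $\Gamma$ is a Rabin condition with $O(N)$ pairs, hence recognised by a one-state deterministic Rabin automaton---is sound, and the Index Appearance Record then yields a deterministic parity automaton with $2^{O(N\log N)}$ states and $O(N)$ priorities, well within the required bounds.

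The paper takes a different but equally standard route: it builds a \emph{nondeterministic B\"uchi} automaton for $L$ with polynomially many states in $N$ (guess the winning coordinate and the witnessing even priority, then check that no larger priority is seen from some point on), and then applies Safra/Piterman-style determinisation to obtain a deterministic parity automaton of exponential size with polynomially many priorities. So the paper goes through NBW-to-DPW determinisation, whereas you go through DRW-to-DPW via IAR. Your approach is arguably more direct for this particular language and gives tighter constants (roughly $N!$ states and $O(N)$ priorities versus $2^{O(N^2\log N)}$ states and $O(N^2)$ priorities if one plugs $N^2$ into the determinisation bound), but both arrive at the stated lemma without difficulty.
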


\begin{proof}
It is a~rather standard construction. One possibility is to design a~nondeterministic B\"uchi automaton for $L$ with $N^2$ states. Then, the standard determinisation procedure \cite{Pit07} applied to this automaton gives a~deterministic parity automaton as in the statement.
\end{proof}

Therefore, we conclude the proof of \cref{thm:exp-time-gfg-alt} by taking a~product of the game $G''$ with the automaton from \cref{lem:aut-for-disjunction} and then solving the resulting parity game.

\end{document}